\documentclass[lettersize,journal]{IEEEtran}

\usepackage{amsmath,amssymb,amsfonts}
\usepackage{amsthm}
\usepackage{placeins}
\usepackage{caption}
\usepackage{array}
\usepackage[caption=false,font=normalsize,labelfont=sf,textfont=sf]{subfig}
\usepackage{textcomp}
\usepackage{stfloats}
\usepackage{url}
\usepackage{graphicx}
\usepackage{cite}
\usepackage{multirow}
\usepackage{booktabs}
\usepackage{float}
\usepackage{bm}
\usepackage{tikz}
\usepackage{makecell}
\usepackage[pass]{geometry}
\usetikzlibrary{arrows.meta,positioning,calc}

\newtheorem{theorem}{Theorem}
\newtheorem{lemma}{Lemma}
\newtheorem{proposition}{Proposition}
\newtheorem{corollary}{Corollary}
\theoremstyle{definition}
\newtheorem{remark}{Remark}

\newcommand{\diag}{\mathrm{diag}}
\newcommand{\E}{\mathbb{E}}
\renewcommand{\jmath}{\mathrm{j}}

\begin{document}
\title{Gain Before Tracking: Compensating Power-Amplifier Nonlinearity and Phase Noise in AFDM Uplinks}

\author{Mohammad~Tasin and Sadman~Sakib~Rahman%
\thanks{M.~Tasin and S.~S.~Rahman are with the Department of Electrical and
Electronic Engineering, Bangladesh University of Engineering and Technology
(BUET), Dhaka 1205, Bangladesh (e-mail: tasin2006072@gmail.com;
ssrahman@eee.buet.ac.bd).}%
}

\markboth{Preprint, October 2026}%
{Tasin and Rahman: Gain Before Tracking for AFDM Uplinks}

\maketitle

%% ---------------------------------------------------------------
\begin{abstract}
Affine frequency division multiplexing (AFDM) is a candidate waveform for fast-moving vehicles and satellites, but real transmitters add power-amplifier (PA) compression and oscillator phase noise, and no AFDM receiver corrects both. We show that AFDM's transform treats every component equally in magnitude, so the PA's effect reduces to one gain shared by all components and phase noise to a common rotation that OFDM's standard tracker removes. The order matters: the tracker relies on symbol decisions that PA compression corrupts, so the gain should be corrected first. On a six-path, fractional-Doppler channel with 256 subcarriers and a standard-compliant PA back-off, doing so lowers the uncoded bit-error rate by 7.5--12\% for AFDM and about 5\% for OFDM, and saves 0.5~dB at 10\% block error with a rate-3/4 code; the effect is not significant at light back-off. A closed-form gain matches a blind estimate and, without pilots, performs comparably to a pilot-based joint gain estimate that costs 6\% overhead. Controls show that AFDM's edge over OFDM is mostly spreading across the channel, not Doppler robustness: single-carrier DFT-s-OFDM matches it at equal back-off. 
\end{abstract}

\begin{IEEEkeywords}
AFDM, Bussgang decomposition, doubly-dispersive channels, phase noise, power amplifier nonlinearity, vehicular communications.
\end{IEEEkeywords}

%% ---------------------------------------------------------------
\section{Introduction}
\label{sec:intro}
\IEEEPARstart{F}{ast-moving} users---cars, trains, drones, and low-earth-orbit satellites---see a radio channel that changes quickly in both time and frequency (a ``doubly-dispersive'' channel), which is one reason future 6G systems look beyond today's OFDM waveform~\cite{dang2020what6g,afdm6g2026}. Affine frequency division multiplexing~(AFDM) is a leading candidate. It replaces OFDM's Fourier transform with the discrete affine Fourier transform~(DAFT)---multiply by a chirp, apply an FFT, multiply by a second chirp---so that every symbol is spread across the channel's delay--Doppler diversity at essentially OFDM's receiver cost~\cite{bemani2021afdm,bemani2023afdm}.

That promise assumes ideal hardware. Real uplink transmitters drive a power amplifier near saturation, and their oscillators add phase noise. Existing AFDM hardware studies mostly measure the damage~\cite{sui2026hwi,ctafdm2026,iqi2026afdm}; the few fixes address one impairment at a time, namely phase noise~\cite{pnafdm2026} and receiver IQ imbalance~\cite{iqi2025afdm}, and to our knowledge none corrects amplifier nonlinearity and phase noise together. OTFS, a related waveform, has only a black-box neural receiver for two narrow imperfections~\cite{dl_otfs_hwi2023}.

Our starting point is a structural fact: the DAFT matrix, like the DFT matrix, is \emph{equimodular}---every entry has the same magnitude $1/\sqrt{N}$. Consequently the PA's Bussgang gain and distortion are approximately the same on every AFDM component, and phase noise acts as one common rotation plus leakage, so OFDM's standard decision-directed tracker~\cite{pollet1995,petrovic2005} carries over. The less obvious fact is the \emph{order} of the two corrections: the tracker trusts its own symbol decisions, and PA compression shrinks the constellation toward the decision thresholds and corrupts them, so the gain must be corrected before tracking.

\subsection{Related Work}
\label{sec:related}
The closest prior work is the hardware-impairment analysis of MIMO-AFDM in~\cite{sui2026hwi}, which derives a Bussgang-based LMMSE receiver with a real gain and shows AFDM keeps its diversity advantage over OFDM under impairments; it has no phase tracker and does not study the order of corrections, and we compare against it in Section~\ref{sec:gain_source}. Our phase-noise results build on classical OFDM analysis, namely the CPE as the frame average of $e^{j\varphi(n)}$ and the Wiener-ICI power law~\cite{pollet1995,armada2001,petrovic2005}; Theorem~\ref{thm:kernel}(a)(b) restate these for the affine--Fourier family, and the converse~(c), to our knowledge, is new but plays no role in the receiver. For the PA we use the Bussgang decomposition~\cite{bussgang1952,demir2021distortion}, and for the blind gain a moment-based estimator from the SNR-estimation literature~\cite{pauluzzi2000}. We target the uplink because the cost-constrained PA sits in the mobile terminal; digital predistortion is possible there~\cite{dpd_survey2010}, but it adds a feedback path and calibration at the terminal, so a receiver-side correction is attractive, and we leave a comparison for future work. Placing a scalar gain ahead of a decision-directed phase tracker is ordinary receiver practice, and 5G NR phase-tracking pilots estimate phase, or a complex gain, from known symbols; we do not claim the principle is new for OFDM, and we have not surveyed OFDM joint nonlinearity and phase-noise compensation exhaustively. What we add is a quantified order effect for AFDM under PA compression, a closed-form value for the scalar, and a comparison with a pilot-based baseline (Section~\ref{sec:gain_source}).

The first version of this idea was checked in a small toy setup (64 subcarriers, two paths, integer Doppler). Here we test it in a more demanding regime: 256--512 subcarriers, a six-path channel with fractional Doppler, a PA back-off whose error-vector magnitude (EVM) satisfies the 5G NR limit~\cite{3gpp38101}, channel sounding with thermal noise and compressed pilots, and convolutional coding. We make four contributions.

\textbf{(1) Structure.} We show that equimodularity is what lets OFDM's phase-noise facts transfer (Theorem~\ref{thm:kernel}) and that, under a Gaussian approximation, PA distortion has the same statistics on every AFDM component, so one scalar suffices (Proposition~\ref{prop:uniform}). These results motivate the receiver; they do not predict the effect sizes below.

\textbf{(2) Gain before tracking.} With controls, applying the gain before the tracker lowers uncoded BER by 7.5--12\% for AFDM and about 5\% for OFDM relative to the reverse order in the realistic regime (lower BER on 25--30 of 30 channels), and BLER by 9--28\% at rate 3/4 and 3 or 5~dB back-off, which is 0.5~dB at 10\% BLER at the NR-compliant back-off and 1.6~dB at 3~dB. The effect needs compression: with a linear PA a data-derived gain applied first is no better, and at $8$~dB back-off the effect is not significant. Gain-first gives 8--10\% fewer errors in the tracker's first decisions and a phase estimate about 20\% closer to the oracle (Section~\ref{sec:order_results}).

\textbf{(3) Three ways to get the gain.} The closed-form Bussgang gain, a blind moment estimator, and a gain calibrated on a known-symbol preamble of 4096 symbols. The blind estimator ties the closed form. The closed form, which uses no pilots, ties a PT-RS-style joint complex gain estimated from 16 in-frame pilots (6.25\% overhead), so what matters is getting the amplitude right, not how. The preamble gain is better by a further 6--14\% in our idealized sounding.

\textbf{(4) A controlled AFDM-versus-OFDM comparison.} With identical hardware, AFDM has 14--31\% lower uncoded BER than OFDM over $\nu_{\max}=0.1$--$4$ (not significant at $N=64$, $6.1\%$ at $3$~dB back-off) and needs 2--3~dB less SNR at coding rate 3/4. But OCDM and DFT-s-OFDM obtain much of the same edge, and DFT-s-OFDM matches AFDM at equal back-off, so the edge is mostly spreading (delay diversity), not Doppler robustness or chirp-specific, and with a rate-1/2 code it nearly vanishes.

Section~\ref{sec:system_model} gives the model, Section~\ref{sec:receiver} the receiver, Section~\ref{sec:theory} the theory, Section~\ref{sec:experiments} the results, and Section~\ref{sec:conclusion} the limits. Proofs, experimental details and additional diagnostics are in the appendices.

%% ---------------------------------------------------------------
\section{System Model}
\label{sec:system_model}
\subsection{DAFT Signal Model}
\label{sec:simo}
Let $\boldsymbol{\Lambda}_{c}=\diag\{e^{-j2\pi c n^2}\}_{n=0}^{N-1}$ and let $\mathbf{F}$ be the unitary DFT. The DAFT demodulation matrix is
\begin{equation}
\mathbf{A}=\boldsymbol{\Lambda}_{c_2}\mathbf{F}\boldsymbol{\Lambda}_{c_1},\qquad
c_1=\frac{2(\lceil\nu_{\max}\rceil+2)+1}{2N},\;\; c_2=\frac{1}{2N},
\label{eq:daft}
\end{equation}
where $\nu_{\max}$ is the maximum Doppler in subcarrier spacings; the choice of $c_1$ gives margin beyond the integer Doppler~\cite{bemani2023afdm}. Setting $c_1=c_2=0$ gives OFDM, which we use as the reference waveform throughout. Both matrices are unitary with all entries of modulus $1/\sqrt{N}$ (\emph{equimodular}). The transmit frame is $\mathbf{s}=\mathbf{A}^H\mathbf{x}$ for $N$ Gray-mapped 16-QAM symbols $\mathbf{x}$, preceded by a chirp-periodic prefix (CPP) of $L=16$ samples (an ordinary cyclic prefix for OFDM). The subcarrier spacing is $30$~kHz, so the sample rate is $f_s=30N$~kHz.

\subsection{Doubly-Dispersive Channel}
\label{sec:channel}
The channel has $P$ paths with complex gains $h_i$, integer delays $l_i$ and Doppler shifts $\nu_i$ (in subcarrier spacings, generally fractional):
\begin{equation}
r[n]=\sum_{i=1}^{P}h_i\,s'[n-l_i]\,e^{j2\pi\nu_i n/N}+w[n],
\label{eq:chan}
\end{equation}
with $w[n]\sim\mathcal{CN}(0,\sigma_n^2)$. After CPP removal and DAFT demodulation, the effective channel is $\mathbf{H}_{\mathrm{eff}}=\mathbf{A}\mathbf{H}_t\mathbf{A}^H$ with $\mathbf{H}_t$ the time-domain channel matrix; for integer Doppler it has $P$ nonzeros per row, and fractional Doppler spreads each over neighboring bins~\cite{bemani2023afdm}. Our experiments use $P=6$, delays $\{0,1,2,4,6,9\}$ samples, an exponential power profile, Rayleigh gains normalized to unit total power, and per-path Doppler $\nu_i=\nu_{\max}\cos\theta_i$ with $\theta_i$ uniform, so every path has a fractional Doppler. We set $\nu_{\max}=1$ by default and sweep $0.1$--$4$. For orientation, $\nu_{\max}=0.1$ corresponds to $116$~km/h at $28$~GHz; $\nu_{\max}\ge1$ is a satellite or extreme-mobility regime, and a 120~km/h, 3.5~GHz link has $\nu_{\max}=0.013$.

\subsection{Power Amplifier and Phase Noise}
\label{sec:pa_model}
The PA is a memoryless Rapp nonlinearity~\cite{rapp1991} with AM/AM and AM/PM responses
\begin{equation}
g(r)=\frac{r}{[1+(r/P_{\mathrm{sat}})^{2p}]^{1/(2p)}},\qquad
\phi(r)=\frac{\alpha r^2}{P_{\mathrm{sat}}^2+r^2},
\label{eq:rapp}
\end{equation}
so that $\tilde s_n=g(|s_n|)e^{j\angle s_n+j\phi(|s_n|)}$, with $p=2$ and $\alpha=0.1$~rad. The operating point is the input back-off $\mathrm{IBO}=20\log_{10}(P_{\mathrm{sat}}/\sigma_s)$. Our default is $\mathrm{IBO}=5$~dB, where the PA's EVM is $9.1\%$, inside the $12.5\%$ limit for 16-QAM uplinks~\cite{3gpp38101}; at $3$~dB the EVM is $14.0\%$ and violates that limit, which we keep as a stress case, and $8$~dB ($3.9\%$) is a lightly driven PA. The PA is applied at the symbol rate, without oversampling.

Oscillator phase noise is a Wiener process, $\varphi(n)=\varphi(n-1)+w_\varphi(n)$ with $w_\varphi\sim\mathcal{N}(0,\sigma_w^2)$, $\sigma_w^2=2\pi\Delta f/f_s$~\cite{mathecken2011,petrovic2005}. The linewidth is $\Delta f=100$~Hz, below the threshold $\Delta f^*$ of Section~\ref{sec:dfstar} at the default back-off. The transmitted frame is $\mathbf{s}'=\boldsymbol{\Phi}\,f_{\mathrm{PA}}(\mathbf{s})$ with $\boldsymbol{\Phi}=\diag\{e^{j\varphi(n)}\}$, independent per frame, and the DAFT-domain received signal is
\begin{equation}
\mathbf{y}=\mathbf{H}_{\mathrm{eff}}\,\mathbf{A}\boldsymbol{\Phi}\,f_{\mathrm{PA}}\!\bigl(\mathbf{A}^H\mathbf{x}\bigr)+\mathbf{n}.
\label{eq:rx_daft}
\end{equation}

\subsection{Channel Sounding and Baseline Receiver}
\label{sec:baseline}
We estimate $\mathbf{H}_{\mathrm{eff}}$ by sounding: for each chirp-subcarrier $m$, a pilot frame carrying the single symbol $\sqrt{N}\,\mathbf{e}_m$ goes through the full transmit chain, so the time-domain pilot has the data's RMS level and the PA compresses it. Column $m$ of $\hat{\mathbf{H}}_{\mathrm{eff}}$ is the DAFT-domain response, with independent thermal noise per frame. The $N$ sounding frames share one phase-noise realization, an idealization (phase noise slow relative to the preamble) that leaves a static phase offset of a few degrees in $\hat{\mathbf{H}}_{\mathrm{eff}}$. The preamble overhead is not charged. The baseline LMMSE equalizer is
\begin{equation}
\hat{\mathbf{s}}=\bigl(\hat{\mathbf{H}}_{\mathrm{eff}}^H\hat{\mathbf{H}}_{\mathrm{eff}}+\sigma_n^2\mathbf{I}\bigr)^{-1}\hat{\mathbf{H}}_{\mathrm{eff}}^H\mathbf{y}.
\label{eq:lmmse}
\end{equation}
We solve it densely at $O(N^3)$; the sparsity of $\mathbf{H}_{\mathrm{eff}}$ at integer Doppler admits cheaper solvers~\cite{bemani2023afdm}, which we do not use. Every receiver below post-processes $\hat{\mathbf{s}}$.

%% ---------------------------------------------------------------
\section{Receiver}
\label{sec:receiver}
The receiver has two stages after the LMMSE equalizer (Fig.~\ref{fig:receiver}), neither trained: a scalar gain correction (Stage~1) and a decision-directed phase tracker (Stage~2).

\begin{figure}[!t]
\centering
\resizebox{\columnwidth}{!}{%
\begin{tikzpicture}[>=Stealth,node distance=4.2mm,
 blk/.style={draw,rounded corners=2pt,minimum height=7mm,inner sep=2.5pt,font=\footnotesize,align=center}]
\node[blk] (y) {$\mathbf{y}$};
\node[blk,right=of y] (eq) {LMMSE\\$\hat{\mathbf{H}}_{\mathrm{eff}}$};
\node[blk,right=of eq,fill=orange!15] (g) {Stage 1\\gain $\hat g$};
\node[blk,right=of g,fill=green!12] (t) {Stage 2\\CPE tracker};
\node[blk,right=of t] (d) {demap /\\decode};
\draw[->] (y)--(eq); \draw[->] (eq)--(g); \draw[->] (g)--(t); \draw[->] (t)--(d);
\node[font=\scriptsize,align=center,below=3mm of g] (src) {closed form \eqref{eq:wiener}\\blind M2M4\\preamble \eqref{eq:cal}};
\draw[->,dashed] (src)--(g);
\end{tikzpicture}}
\caption{Proposed receiver. The gain is applied before the tracker; it can come from the PA model, from blind moments, or from a known-symbol preamble.}
\label{fig:receiver}
\end{figure}

\subsection{Stage 1: Scalar Gain Correction}
\label{sec:bussgang_layer}
For a zero-mean circular Gaussian input $s$ the Bussgang theorem~\cite{bussgang1952,demir2021distortion} writes the PA output as $f_{\mathrm{PA}}(s)=Bs+\eta$ with $\E\{s^*\eta\}=0$, where the complex gain $B$ and distortion power $\sigma_\eta^2$ follow from the IBO (Section~\ref{sec:bussgang_const}). By Proposition~\ref{prop:uniform}, $B$ and $\sigma_\eta^2$ are approximately the same on every DAFT component. Under this model the linear scalar that best recovers the pre-PA signal from its PA output is
\begin{equation}
W=\frac{B^*}{|B|^2+\sigma_\eta^2}.
\label{eq:wiener}
\end{equation}
(It is the per-component LMMSE scalar under the Bussgang model, not the MMSE detector for the whole chain.) Stage~1 outputs $\mathbf{v}=\hat g\,\hat{\mathbf{s}}$, and $\hat g$ can be obtained three ways:

\emph{Closed form:} $\hat g=W$, computed once from the IBO and the PA model.

\emph{Blind moments:} with $M_2=\E|\hat s|^2$ and $M_4=\E|\hat s|^4$ over unlabeled symbols and the unit-power 16-QAM kurtosis $k_a=1.32$, the M2M4 estimator~\cite{pauluzzi2000} gives the amplitude $|\hat g|^{-1}=[(2M_2^2-M_4)/(2-k_a)]^{1/4}$; it needs no PA model and no pilots.

\emph{Preamble calibration:} from $K$ known-symbol frames, the regression coefficient
\begin{equation}
\hat g^{-1}=\sum_k \hat s_k x_k^*\Big/\sum_k |x_k|^2
\label{eq:cal}
\end{equation}
is a complex gain, so it also removes the static phase offset left in $\hat{\mathbf{H}}_{\mathrm{eff}}$ by sounding. We use $K=16$ frames at $N=256$ (4096 symbols).

\subsection{Stage 2: Decision-Directed CPE Tracker}
\label{sec:cpe_tracker}
Phase noise creates a common phase error $J_0=\frac1N\sum_n e^{j\varphi(n)}$ on all subcarriers (Corollary~\ref{cor:tracker}). The tracker estimates it from hard decisions $c_k$ (nearest 16-QAM points) as
\begin{equation}
\hat\theta=\angle\Bigl(\sum_k v_k c_k^*\Bigr),\qquad v_k^{\mathrm{tr}}=v_k e^{-j\hat\theta},
\label{eq:cpe}
\end{equation}
iterated three times with refreshed decisions. A tracker followed by the gain is the \emph{tracker-first} (post) order; Stage~1 followed by the tracker is \emph{gain-first} (pre).

\subsection{Why the Gain Must Come First}
\label{sec:order}
The angle in~\eqref{eq:cpe} ignores a common scale only if the decisions $c_k$ are right, and compression makes them wrong. At $\mathrm{IBO}=5$~dB, $|B|=0.914$ moves the outer 16-QAM level from $3/\sqrt{10}$ to $0.867$, shrinking its margin to the decision threshold $2/\sqrt{10}$ from $0.316$ to $0.235$ ($26\%$). Distortion, leakage and noise then push more symbols across thresholds, and each wrong decision biases $\hat\theta$. Restoring the gain afterwards recovers the scale but not the phase error already committed. The mechanism predicts that the order effect should vanish for a linear PA and fade as the back-off grows; Section~\ref{sec:order_results} tests both.

\subsection{Cost}
\label{sec:cost}
Stage~1 is one complex multiplication per subcarrier and Stage~2 three passes of a hard decision and an $N$-term sum, so both are $O(N)$; the dense LMMSE equalizer of~\eqref{eq:lmmse} dominates the cost at $O(N^3)$ and is common to all receivers. The closed-form gain needs the PA model and IBO, the M2M4 gain needs neither but one block of unlabeled symbols, and the preamble gain needs $K$ known-symbol frames.

%% ---------------------------------------------------------------
\section{Theory}
\label{sec:theory}
We establish what the structure gives us: that the Bussgang model applies equally to AFDM (Lemma~\ref{lem:equimod}), that one scalar describes all components (Proposition~\ref{prop:uniform}), and that OFDM's phase-noise facts transfer (Theorem~\ref{thm:kernel}). The theory does not predict the order effect quantitatively or the AFDM-versus-OFDM BER difference; those are measured in Section~\ref{sec:experiments}.

\subsection{Equimodularity and the Gaussian Approximation}
\label{sec:gaussianity}
\begin{lemma}[Equimodularity of the DAFT basis]
\label{lem:equimod}
Every entry of $\mathbf{A}^H$ in~\eqref{eq:daft} has modulus $1/\sqrt{N}$. Hence each time-domain sample of an AFDM frame is a sum of $N$ equal-weight terms with i.i.d.\ symbols, so by the central limit theorem its distribution is approximately complex Gaussian for large $N$, as for OFDM (Berry--Esseen bounds the error of the distribution function at $O(N^{-1/2})$). The PA's Bussgang constants $(B,\sigma_\eta^2)$ depend only on this marginal and are therefore approximately equal to OFDM's; we treat this as an approximation and check it numerically in the appendices.
\end{lemma}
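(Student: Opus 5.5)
The exact part of Lemma~\ref{lem:equimod} is the modulus claim, and I would prove it directly from the factorization~\eqref{eq:daft}. Write $\mathbf{A}^H=\boldsymbol{\Lambda}_{c_1}^H\mathbf{F}^H\boldsymbol{\Lambda}_{c_2}^H$. Entry $(n,m)$ is then
\begin{equation*}
[\mathbf{A}^H]_{n,m}=e^{j2\pi c_1 n^2}\,\tfrac{1}{\sqrt N}e^{j2\pi nm/N}\,e^{j2\pi c_2 m^2}.
\end{equation*}
The chirp matrices are diagonal with unit-modulus entries, and the unitary DFT has entries of modulus $1/\sqrt N$. The product therefore has modulus exactly $1/\sqrt N$ for every $(n,m)$ and any real $c_1,c_2$, including the OFDM case $c_1=c_2=0$. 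Unitarity of $\mathbf{A}$ follows the same way, as a product of unitaries.

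For the distributional claim, fix a sample index $n$ and write $s_n=\sum_{m}a_{nm}x_m$ with $|a_{nm}|=1/\sqrt N$. The $x_m$ are i.i.d. unit-power 16-QAM symbols. Because 16-QAM is invariant under rotation by $\pi/2$, each term is zero mean with $\E\{x_m^2\}=0$. This gives $\E|s_n|^2=1$ and $\E\{s_n^2\}=\sum_m a_{nm}^2\E\{x_m^2\}=0$, so $s_n$ is proper, exactly as in OFDM, whatever the chirp phases. I would then stack real and imaginary parts and apply a multivariate Lindeberg (or Lyapunov) CLT to the triangular array $\{a_{nm}x_m\}_m$. Lyapunov's condition holds trivially: the summands are bounded by $\max|x|/\sqrt N$, so the third-moment sum is $O(N^{-1/2})$. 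That is also the Berry--Esseen rate, via a multivariate Berry--Esseen bound (e.g. Bentkus) applied to the $2$-dimensional sum with identity covariance. The limit is $\mathcal{CN}(0,1)$ for AFDM and OFDM alike.

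Finally, the Bussgang constants are $B=\E\{s^*f_{\mathrm{PA}}(s)\}/\E|s|^2$ and $\sigma_\eta^2=\E|f_{\mathrm{PA}}(s)|^2-|B|^2$. Both are functionals of the marginal law of a single sample $s_n$, because the PA is memoryless. Since that law converges to the same $\mathcal{CN}(0,1)$ for both waveforms, the constants converge to the same Gaussian-input values.

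The main obstacle is this last step, because convergence in distribution does not by itself carry over to these expectations. I would handle it by noting that the Rapp integrands $r\,g(r)$ and $g(r)^2$ are continuous and grow at most like $r^2$. The second moments are uniformly bounded, and the summands are bounded, so uniform integrability follows from a uniform fourth-moment bound. Together these give convergence of $B$ and $\sigma_\eta^2$. A quantitative $O(N^{-1/2})$ rate for these constants would need a Berry--Esseen estimate for smooth unbounded test functions, and I would not claim it. This matches the lemma's own hedging: it treats the equality of constants as an approximation to be checked numerically.
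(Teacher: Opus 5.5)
Your proof is correct, and the modulus part is exactly the paper's argument: the chirp--DFT--chirp factorization with unimodular diagonals.

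For the rest you take a more rigorous route than the paper. The paper only invokes the classical Berry--Esseen/CLT argument for the IDFT modulator, and concedes that the constants for the real and imaginary parts depend on the weight phases. It also says its Monte-Carlo checks, not the argument, support the Bussgang claim.

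You add three things:
\begin{enumerate}
\item Properness from the $\pi/2$-symmetry of 16-QAM. This pins the covariance of $(\mathrm{Re}\,s_n,\mathrm{Im}\,s_n)$ at $\tfrac12\mathbf{I}_2$ for every sample and every chirp pair. You should therefore standardize by $\sqrt2$ rather than calling it identity covariance.
\item A multivariate Berry--Esseen bound whose right-hand side is proportional to $\sum_m|a_{nm}|^3\E|x|^3$. It depends only on the moduli, so it is literally the same for AFDM and OFDM and uniform in $n$ and $(c_1,c_2)$.
\item A uniform-integrability step that actually proves $B$ and $\sigma_\eta^2$ converge to their Gaussian-input values.
\end{enumerate}
Properness is something the paper's ``as for OFDM'' silently needs. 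For an improper constellation such as BPSK, OFDM's samples $n=0$ and $n=N/2$ are real-valued and never approach $\mathcal{CN}(0,1)$.

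Two refinements are worth making.

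\textbf{The integrability step is lighter than you make it.} Since $g\le P_{\mathrm{sat}}$, the integrand $g^2$ is bounded and $r\,g(r)$ grows only linearly. The exact identity $\E|s_n|^2=1$ therefore suffices. Your fourth moment is in fact exact and phase-free: $\E|s_n|^4=2-(2-k_a)/N$.

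\textbf{A rate is within reach with your own tools.} Both constants are built from expectations of radial functions $h(|s|)$ with $h(0)=0$ and bounded $h'$, and disks are convex. With $z\sim\mathcal{CN}(0,1)$, write $\E h(|s_n|)-\E h(|z|)=\int_0^\infty h'(r)\,[\Pr(|s_n|>r)-\Pr(|z|>r)]\,dr$. Use the Bentkus bound on $[0,R]$ and a Hoeffding tail beyond $R\asymp\sqrt{\log N}$. This gives $O(N^{-1/2}\sqrt{\log N})$. Neither the lemma nor the paper claims such a rate, so this is optional.
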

\begin{IEEEproof}[Proof sketch]
$\boldsymbol{\Lambda}_{c_1},\boldsymbol{\Lambda}_{c_2}$ are diagonal and unitary, so $|[\mathbf{A}^H]_{nk}|=|[\mathbf{F}^H]_{nk}|=1/\sqrt{N}$. The rest is the Berry--Esseen bound for weighted sums; see Appendix~\ref{sec:supp_lemma1}.
\end{IEEEproof}

\subsection{Bussgang Constants for the Rapp PA}
\label{sec:bussgang_const}
Let $\kappa=10^{-\mathrm{IBO}/20}$ and $u=|s|^2\sim\mathrm{Exp}(1)$ for a unit-power Gaussian input driven at $\kappa s$ and scaled back by $1/\kappa$:
\begin{equation}
\begin{aligned}
B&=\tfrac{1}{\kappa}\E_u\!\bigl[g(\kappa\sqrt{u})e^{j\phi(\kappa\sqrt{u})}\sqrt{u}\bigr],\\
\sigma_\eta^2&=\tfrac{1}{\kappa^2}\E_u\!\bigl[g^2(\kappa\sqrt{u})\bigr]-|B|^2 .
\end{aligned}
\label{eq:bussgang_afdm}
\end{equation}
At $\mathrm{IBO}=5$~dB, $B=0.913+0.031j$ ($|B|=0.914$, $\arg B=1.9^\circ$) and $\sigma_\eta^2=0.0069$ (EVM $=\sigma_\eta/|B|=9.1\%$); at $3$~dB, $|B|=0.849$, $\sigma_\eta^2=0.0141$; at $8$~dB, $|B|=0.970$, $\sigma_\eta^2=0.0015$.

\subsection{Subcarrier Uniformity}
\label{sec:subcarrier_uniform}
\begin{proposition}[Approximately uniform DAFT-domain distortion]
\label{prop:uniform}
Under the Gaussian approximation of Lemma~\ref{lem:equimod}, the time-domain samples of an AFDM frame with i.i.d.\ symbols are pairwise uncorrelated with equal power, so the PA's Bussgang residual has covariance $\mathbf{C}_\eta\approx\sigma_\eta^2\mathbf{I}$, with an error that vanishes as $N$ grows under the Gaussian approximation. By unitarity, $\mathbf{A}\mathbf{C}_\eta\mathbf{A}^H\approx\sigma_\eta^2\mathbf{I}$: the gain $B$ and distortion power $\sigma_\eta^2$ are approximately equal on every chirp-subcarrier.
\end{proposition}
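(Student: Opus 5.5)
The plan has four steps: compute the second-order statistics of $\mathbf{s}=\mathbf{A}^H\mathbf{x}$ exactly, transfer them to the Bussgang residual through the Gaussian approximation of Lemma~\ref{lem:equimod}, rotate into the DAFT domain by unitarity, and finally argue that the DAFT-domain gain is the same $B$ on every component.

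\textbf{Step 1: transmit covariance.} For i.i.d.\ zero-mean unit-power symbols,
\[
\E\{\mathbf{s}\mathbf{s}^H\}=\mathbf{A}^H\E\{\mathbf{x}\mathbf{x}^H\}\mathbf{A}=\mathbf{A}^H\mathbf{A}=\mathbf{I}.
\]
So the samples are pairwise uncorrelated with equal power $1$, and this holds exactly for any $N$. Unitarity is the only property used here. Equimodularity (Lemma~\ref{lem:equimod}) is what makes each marginal approximately $\mathcal{CN}(0,1)$, so every sample has the same $(B,\sigma_\eta^2)$ given by~\eqref{eq:bussgang_afdm}. I would also note the pseudo-covariance $\E\{\mathbf{s}\mathbf{s}^T\}=\mathbf{A}^H\E\{\mathbf{x}\mathbf{x}^T\}\mathbf{A}^*=\mathbf{0}$, because 16-QAM is proper. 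This justifies treating the input as circular.

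\textbf{Step 2: residual covariance (the main obstacle).} Uncorrelated does not mean independent for non-Gaussian samples, so the off-diagonal entries of $\mathbf{C}_\eta$ are not exactly zero. I would invoke the joint Gaussian approximation, i.e.\ the multivariate CLT applied to pairs $(s_n,s_m)$, whose covariance is $\mathbf{I}_2$ by Step~1. Under that approximation uncorrelated means independent. Then for $n\neq m$,
\[
\E\{\eta_n\eta_m^*\}=\E\{\eta_n\}\,\E\{\eta_m^*\}=0,
\]
since $\eta_n$ is zero-mean by circular symmetry: the PA map commutes with phase rotation. The diagonal entries equal $\sigma_\eta^2$. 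To quantify the error, I would expand $f_{\mathrm{PA}}$ in polynomial (or Laguerre) terms. The cross-moments $\E\{s_n^a s_n^{*b} s_m^{*c}s_m^{d}\}$ deviate from their Gaussian values by fourth-cumulant terms of the form
\[
\kappa_4\sum_k |A_{kn}|^2|A_{km}|^2\cdots=O(1/N),
\]
using equimodularity again. This is the step I expect to be delicate. The bound is per entry, so I would state the result as entrywise (or normalized) convergence $[\mathbf{C}_\eta]_{nm}\to\sigma_\eta^2\delta_{nm}$, which matches the proposition's phrase ``under the Gaussian approximation.''

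\textbf{Step 3: rotation into the DAFT domain.} Given $\mathbf{C}_\eta\approx\sigma_\eta^2\mathbf{I}$, unitarity gives
\[
\mathbf{A}\mathbf{C}_\eta\mathbf{A}^H\approx\sigma_\eta^2\mathbf{A}\mathbf{A}^H=\sigma_\eta^2\mathbf{I}.
\]
The error is $\mathbf{A}\mathbf{E}\mathbf{A}^H$, whose spectral norm equals $\|\mathbf{E}\|$. The error is therefore controlled in operator norm rather than entrywise, and I would flag that distinction.

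\textbf{Step 4: gain on each chirp-subcarrier.} The DAFT-domain output is
\[
\mathbf{A}f_{\mathrm{PA}}(\mathbf{s})=B\mathbf{A}\mathbf{s}+\mathbf{A}\boldsymbol{\eta}=B\mathbf{x}+\mathbf{A}\boldsymbol{\eta}.
\]
Each component therefore sees the same scalar $B$. Its residual $[\mathbf{A}\boldsymbol{\eta}]_k$ has variance $\approx\sigma_\eta^2$ by Step~3. It is uncorrelated with $x_k$ because
\[
\E\{\mathbf{A}\boldsymbol{\eta}\mathbf{x}^H\}=\mathbf{A}\,\E\{\boldsymbol{\eta}\mathbf{s}^H\}\mathbf{A}=\mathbf{0},
\]
which again rests on the per-sample Bussgang orthogonality together with the approximate independence across samples from Step~2.
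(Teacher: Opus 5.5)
Your proposal follows the paper's proof: unitarity gives $\E\{\mathbf{s}\mathbf{s}^H\}=P_s\mathbf{I}$, the Gaussian approximation makes uncorrelated samples independent so the memoryless residuals are uncorrelated, unitarity of $\mathbf{A}$ carries $\sigma_\eta^2\mathbf{I}$ into the DAFT domain, and $B$ is one scalar on every chirp-subcarrier; your zero pseudo-covariance and zero-mean (circular symmetry) remarks state explicitly what the paper's ``uncorrelated implies independent'' step and vanishing cross term assume without saying. The cumulant error estimate goes beyond the paper, which never quantifies the approximation, and, as you suspect, entrywise $O(1/N)$ alone only gives $\|\mathbf{E}\|=O(1)$; however, for $n\neq m$ every single cumulant of $(s_n,s_m)$ carrying the net phase charge $(+1,-1)$ of $\eta_n\eta_m^*$ is proportional to $[\mathbf{A}^H\mathbf{A}]_{nm}=0$ (equimodularity plus unitarity), so surviving terms need at least two other charged cumulants of order at least four, each $O(1/N)$, which makes the off-diagonal entries $O(N^{-2})$ and $\|\mathbf{E}\|_F=O(1/N)$ (also, the last factor in your Step~4 identity should be $\mathbf{A}^H$, not $\mathbf{A}$).
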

\begin{IEEEproof}[Proof sketch]
i.i.d.\ symbols give $\E\{\mathbf{x}\mathbf{x}^H\}=P_s\mathbf{I}$, which the unitary $\mathbf{A}^H$ preserves. Jointly Gaussian, uncorrelated samples are independent, and a memoryless nonlinearity maps independent inputs to independent residuals; the AFDM samples are only approximately Gaussian, so the statement is an approximation, not an exact identity. Monte-Carlo checks of the flat per-subcarrier distortion are in the appendices.
\end{IEEEproof}
Proposition~\ref{prop:uniform} uses only unitarity and the Gaussian approximation, so it also applies to OFDM; it does not apply to transforms whose samples are far from Gaussian, such as the identity (a single carrier). Equimodularity is what Theorem~\ref{thm:kernel} needs.

\subsection{Phase-Noise Kernel Equivalence}
\label{sec:pn_kernel}
\begin{theorem}[Kernel characterization of phase-noise equivalence]
\label{thm:kernel}
Let $\mathcal{D}$ be the diagonal unitary matrices and $\mathfrak{A}=\{\boldsymbol{\Lambda}_2\mathbf{F}\boldsymbol{\Lambda}_1:\boldsymbol{\Lambda}_1,\boldsymbol{\Lambda}_2\in\mathcal{D}\}$, which contains the DFT and the DAFT for every $(c_1,c_2)$. For a diagonal $\boldsymbol{\Phi}=\diag\{\Phi_n\}$ put $\mathbf{G}_{\mathbf{U}}=\mathbf{U}\boldsymbol{\Phi}\mathbf{U}^H$. Then:
\begin{enumerate}
\item[(a)] For any $\mathbf{A}\in\mathfrak{A}$, $\mathbf{G}_{\mathbf{A}}=\boldsymbol{\Lambda}_2\mathbf{G}_{\mathbf{F}}\boldsymbol{\Lambda}_2^H$, so the magnitude pattern and diagonal of $\mathbf{G}_{\mathbf{A}}$ equal the DFT's.
\item[(b)] For any unitary $\mathbf{U}$ whose entries all have modulus $1/\sqrt{N}$, $[\mathbf{G}_{\mathbf{U}}]_{kk}=J_0=\frac1N\sum_n\Phi_n$ for all $k$; if $|\Phi_n|=1$, the off-diagonal (leakage) power of every row is $1-|J_0|^2$.
\item[(c)] Conversely, if the diagonal of $\mathbf{G}_{\mathbf{U}}$ equals $J_0\mathbf{1}$ for every diagonal unitary $\boldsymbol{\Phi}$, then $\mathbf{U}$ has entries of equal modulus; and if the magnitude pattern equals the DFT's for every $\boldsymbol{\Phi}$, then $\mathbf{U}\in\mathfrak{A}$.
\end{enumerate}
\end{theorem}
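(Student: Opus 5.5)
Parts (a) and (b) are direct computations. For part (c) the plan is to turn "for every diagonal unitary $\boldsymbol{\Phi}$" into coefficient identities by choosing suitable test matrices $\boldsymbol{\Phi}$.

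\textbf{(a)} For $\mathbf{A}=\boldsymbol{\Lambda}_2\mathbf{F}\boldsymbol{\Lambda}_1$, diagonal matrices commute, so $\boldsymbol{\Lambda}_1\boldsymbol{\Phi}\boldsymbol{\Lambda}_1^H=\boldsymbol{\Phi}$. Hence $\mathbf{G}_{\mathbf{A}}=\boldsymbol{\Lambda}_2\mathbf{F}\boldsymbol{\Phi}\mathbf{F}^H\boldsymbol{\Lambda}_2^H=\boldsymbol{\Lambda}_2\mathbf{G}_{\mathbf{F}}\boldsymbol{\Lambda}_2^H$. Conjugation by a diagonal unitary multiplies entry $(k,l)$ by a unimodular factor and leaves the diagonal unchanged. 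The DAFT of~\eqref{eq:daft} is the special case $\boldsymbol{\Lambda}_i=\boldsymbol{\Lambda}_{c_i}$.

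\textbf{(b)} The diagonal is $[\mathbf{G}_{\mathbf{U}}]_{kk}=\sum_n|U_{kn}|^2\Phi_n=\frac1N\sum_n\Phi_n=J_0$. If $|\Phi_n|=1$, then $\mathbf{G}_{\mathbf{U}}$ is unitary, so each row has unit norm. The off-diagonal power of every row is therefore $1-|J_0|^2$. This is the affine--Fourier restatement of the classical CPE/ICI facts~\cite{pollet1995,petrovic2005}.

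\textbf{(c), first claim.} The hypothesis reads $\sum_n(|U_{kn}|^2-\tfrac1N)\Phi_n=0$ for all diagonal unitaries. Take $\boldsymbol{\Phi}=\mathbf{I}$, and then $\boldsymbol{\Phi}$ equal to $\mathbf{I}$ with entry $m$ replaced by $-1$. Subtracting the two identities isolates $|U_{km}|^2=1/N$ for each $m$.

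\textbf{(c), second claim, and the main obstacle.} This is the step I expect to be hardest. Write $a^{(kl)}_n=U_{kn}U_{ln}^*$ and $b^{(kl)}_n=\omega^{(k-l)n}/N$, the DFT coefficients with $\omega=e^{-j2\pi/N}$. The hypothesis is
\[
\Bigl|\textstyle\sum_n a_n e^{j\theta_n}\Bigr|=\Bigl|\textstyle\sum_n b_n e^{j\theta_n}\Bigr|
\]
for all phases $\theta_n$. Squaring gives two trigonometric polynomials in $(\theta_0,\dots,\theta_{N-1})$. Uniqueness of Fourier coefficients then gives $a_na_m^*=b_nb_m^*$ for all $n,m$. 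Since every $b_n\neq0$, this forces $a^{(kl)}=\lambda_{kl}b^{(kl)}$ with $|\lambda_{kl}|=1$.

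Taking $l=k$ recovers $|U_{kn}|^2=1/N$. Taking $l=0$ gives $U_{kn}U_{0n}^*=\lambda_{k0}\omega^{kn}/N$, and hence $U_{kn}=\lambda_{k0}\,\omega^{kn}\,U_{0n}$. Thus $\mathbf{U}=\boldsymbol{\Lambda}_2\mathbf{F}\boldsymbol{\Lambda}_1$ with $\boldsymbol{\Lambda}_2=\diag\{\lambda_{k0}\}$ and $\boldsymbol{\Lambda}_1=\diag\{\sqrt{N}U_{0n}\}$. Both are diagonal unitaries because $|U_{0n}|=1/\sqrt N$, so $\mathbf{U}\in\mathfrak{A}$.

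The delicate point is the Fourier-uniqueness step. For $n\neq m$ the monomials $e^{j(\theta_n-\theta_m)}$ are distinct, so matching their coefficients is legitimate. The constant terms $\sum_n|a_n|^2$ and $\sum_n|b_n|^2$ must also agree; they do, since both equal $1/N$ once the first claim has been applied on the diagonal.
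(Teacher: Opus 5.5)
Your proof is correct. For (a), (b) and the first half of (c) it coincides with the paper's argument. Your sign flip at entry $m$ is the paper's single-site test $\boldsymbol{\Phi}=\mathbf{I}+(e^{j\alpha}-1)\mathbf{e}_m\mathbf{e}_m^H$ at $\alpha=\pi$. Your closing factorization through the reference row $l=0$ is the paper's ``cocycle'' step with $\mu_k(n)=\sqrt{N}U_{kn}e^{j2\pi kn/N}$.

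You differ in how you extract the cross-term identity $a_na_m^*=b_nb_m^*$ for the converse. The paper works only with off-diagonal entries $k\neq l$, where $\boldsymbol{\Phi}=\mathbf{I}$ contributes nothing. It uses two-site perturbations $\Phi_n=1+a\,\delta_{n,n_1}+b\,\delta_{n,n_2}$:
\begin{itemize}
\item switching one site off gives $|z_i|=|w_i|$;
\item two specific phase pairs, giving $a\bar b=2$ and $a\bar b=2+2j$, kill the real part and the real-minus-imaginary part of $z_1\bar z_2-w_1\bar w_2$.
\end{itemize}
You instead match all Fourier coefficients of $|\sum_n a_ne^{j\theta_n}|^2$ on the torus at once. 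Your route is shorter and treats every pair $(n,m)$ and every row pair $(k,l)$ uniformly. The paper's route is fully elementary, and it shows that a finite family of test patterns (phases in $\{0,\pi/2,\pi\}$ at one or two sites) already forces $\mathbf{U}\in\mathfrak{A}$. That is a weaker hypothesis than ``for every $\boldsymbol{\Phi}$''.

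One step needs tightening. Coefficient matching gives $a_na_m^*=b_nb_m^*$ only for $n\neq m$. The $n=m$ terms enter only through the constant term. Equality of constant terms is a consequence of the hypothesis, not something you must verify, and it does not give $|a_n|=|b_n|$ individually.

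You also cannot fill this in the way you try to:
\begin{itemize}
\item Appealing to the first claim fails, because the magnitude hypothesis gives only $|[\mathbf{G}_{\mathbf{U}}]_{kk}|=|J_0|$, not $[\mathbf{G}_{\mathbf{U}}]_{kk}=J_0$.
\item Reading off $|U_{kn}|^2=1/N$ from $a^{(kk)}=\lambda_{kk}b^{(kk)}$ is circular, since that rank-one step already used the $n=m$ terms.
\end{itemize}

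The repair uses your own identities. At $l=k$ they read $|U_{kn}|^2|U_{km}|^2=1/N^2$ for all $n\neq m$. This forces $|U_{kn}|^2=1/N$ when $N\ge 3$, and also when $N=2$ once you add $\sum_n|U_{kn}|^2=1$. Then $|a^{(kl)}_n|=1/N=|b^{(kl)}_n|$, the identity holds for $n=m$ as well, and the rest of your argument goes through unchanged.

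The paper's Step~1 needs the same small argument. There, $|U_{kn}||U_{ln}|=1/N$ for $k\neq l$ yields equimodularity only for $N\ge3$, or for $N=2$ after using column normalization.
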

\begin{IEEEproof}[Proof sketch]
(a) Diagonal conjugation preserves diagonals and magnitudes. (b) $[\mathbf{G}_{\mathbf{U}}]_{kk}=\sum_n|U_{kn}|^2\Phi_n$, and $\mathbf{G}_{\mathbf{U}}$ is unitary when $|\Phi_n|=1$. (c) See Appendix~\ref{sec:supp_theorem1}.
\end{IEEEproof}
Parts (a) and (b) restate classical OFDM results~\cite{pollet1995,armada2001,petrovic2005} for the affine--Fourier family. Part~(c) is a characterization: it shows that equimodularity is not merely sufficient, but it plays no role in the receiver or the experiments. Two consequences follow.

\begin{corollary}[Tracker portability]
\label{cor:tracker}
For any equimodular basis the CPE is the same $J_0$ on all subcarriers, so a decision-directed CPE tracker works identically in the DFT and DAFT domains.
\end{corollary}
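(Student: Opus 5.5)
The corollary follows directly from Theorem~\ref{thm:kernel}(b) and (a). I would write the post-equalization DAFT-domain signal, ignoring the PA for the moment, as
\begin{equation*}
\hat{\mathbf{s}}\approx \mathbf{A}\boldsymbol{\Phi}\mathbf{A}^H\mathbf{x}+\tilde{\mathbf{n}}=\mathbf{G}_{\mathbf{A}}\mathbf{x}+\tilde{\mathbf{n}}.
\end{equation*}
This uses the fact that the sounded $\hat{\mathbf{H}}_{\mathrm{eff}}$ absorbs the channel, so the LMMSE output is $\mathbf{x}$ acted on by the phase-noise kernel, plus residual noise. With the PA included, Proposition~\ref{prop:uniform} replaces $\mathbf{x}$ by $B\mathbf{x}$ plus a white residual, and the argument below is unchanged.

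Next I split $\mathbf{G}_{\mathbf{A}}=\diag(\mathbf{G}_{\mathbf{A}})+\mathbf{L}$ into its diagonal and off-diagonal (leakage) parts. Because $\mathbf{A}$ is equimodular (Lemma~\ref{lem:equimod}), Theorem~\ref{thm:kernel}(b) gives $\diag(\mathbf{G}_{\mathbf{A}})=J_0\mathbf{I}$. Hence $\hat s_k=J_0x_k+\epsilon_k$ for every $k$, with the same scalar $J_0$ on all subcarriers. The leakage term $\epsilon_k=\sum_{m\neq k}L_{km}x_m$ has power $(1-|J_0|^2)P_s$ in every row. This is also a consequence of (b), using the i.i.d.\ symbols so that cross terms average out.

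For the tracker I would substitute into~\eqref{eq:cpe}:
\begin{equation*}
\sum_k v_kc_k^*=\hat g\,J_0\sum_k x_kc_k^*+\hat g\sum_k\epsilon_kc_k^*.
\end{equation*}
The statistic therefore depends on the transform only through $J_0$ and through the row statistics of $\mathbf{L}$. By Theorem~\ref{thm:kernel}(a), $\mathbf{G}_{\mathbf{A}}=\boldsymbol{\Lambda}_2\mathbf{G}_{\mathbf{F}}\boldsymbol{\Lambda}_2^H$. So the DAFT leakage matrix is the DFT one, conjugated by a diagonal of unit-modulus phases. Each leakage entry keeps its magnitude and only gains a phase $\lambda_k\lambda_m^*$. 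Against i.i.d.\ circularly symmetric-in-distribution QAM symbols, such phases do not change the distribution of $\epsilon_k$ up to the phase symmetry of the constellation. The estimate $\hat\theta$ and its error statistics are therefore identical in the two domains.

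The main obstacle is this last distributional claim. 16-QAM is invariant only under rotations by multiples of $\pi/2$, not under arbitrary phases $\lambda_m^*$. I would handle it at the level the corollary actually asserts. First, the mean component that the tracker locks onto is exactly $J_0$ in both domains, which is exact by (b). Second, the leakage has identical second-order statistics, namely per-row power and the magnitude pattern. This needs only $\E\{\mathbf{x}\mathbf{x}^H\}=P_s\mathbf{I}$. I would state "works identically" in that sense, rather than claim full equality in distribution.
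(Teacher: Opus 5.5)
Your proposal is correct and follows the paper's route. The paper gives no separate proof: it reads the corollary directly off Theorem~\ref{thm:kernel}(b), which makes the diagonal of $\mathbf{G}_{\mathbf{U}}$ equal to the transform-independent $J_0=\frac1N\sum_n\Phi_n$ for every equimodular $\mathbf{U}$, with part~(a) giving the DAFT the same leakage magnitude pattern as the DFT. You limit ``works identically'' to the common phase and the second-order leakage statistics, because 16-QAM is not invariant under the chirp phases $\lambda_k\lambda_m^*$; this is a fair sharpening of what the paper leaves implicit, not a gap.
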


\emph{Leakage.} No equimodular basis has less phase-noise leakage than OFDM, since total leakage power is $1-|J_0|^2$ in every row; AFDM's benefit lies in channel diversity, not in reduced leakage. For the Wiener model, with $\bar P=\E|f_{\mathrm{PA}}(s)|^2$ the mean PA output power and $r=e^{-\sigma_w^2/2}$,
\begin{equation}
\sigma_{\mathrm{ICI}}^2=\bar P\Bigl(1-\tfrac{1}{N^2}\bigl[N+2\textstyle\sum_{d=1}^{N-1}(N-d)r^d\bigr]\Bigr)\approx\bar P\,\sigma_w^2\,\tfrac{N^2-1}{6N}.
\label{eq:ici_afdm}
\end{equation}
At $\Delta f=100$~Hz and $N=256$ this is $\approx3\times10^{-3}$ at $\mathrm{IBO}=5$~dB; because $\sigma_w^2\propto1/N$ at fixed subcarrier spacing, it is nearly independent of $N$.

\subsection{Linewidth Threshold}
\label{sec:dfstar}
After CPE removal the leakage is data-dependent and acts like extra noise for a scalar-plus-tracker receiver. We define the heuristic threshold $\Delta f^*$ at which $\sigma_{\mathrm{ICI}}^2$ equals $\sigma_\eta^2$, the PA distortion that no scalar can remove; inverting~\eqref{eq:ici_afdm} gives $\Delta f^*=6N\sigma_\eta^2/[2\pi T_s(N^2-1)\bar P]$, which at $30$~kHz spacing is $550$, $234$ and $44$~Hz at $\mathrm{IBO}=3,5,8$~dB. It is an operating rule, not a fundamental limit, and it is the same for every equimodular waveform at equal $N$ and $T_s$. Our $100$~Hz default lies below it at $3$ and $5$~dB and above it at $8$~dB, where PA distortion is small anyway.

The Bussgang-matrix extension to per-antenna PAs in MIMO (block-diagonal gain, uniform distortion, and a cubic cross-antenna law) has been derived but is not included in this version; we do not evaluate MIMO here.

%% ---------------------------------------------------------------
\section{Experimental Evaluation}
\label{sec:experiments}
\subsection{Setup}
\label{sec:setup}
All experiments use one seeded Python/NumPy pipeline implementing Section~\ref{sec:system_model}: $N=256$ unless stated, the six-path fractional-Doppler channel, $\mathrm{IBO}=5$~dB, $100$~Hz phase noise, $30$~kHz spacing, and the sounding of Section~\ref{sec:baseline}, which is idealized (quasi-static, shared phase noise, uncharged overhead). SNR is the ``paper'' $E_b/N_0$ for four uncoded bits per symbol ($E_s/N_0=E_b/N_0+6$~dB), and it is \emph{not} reduced by the code rate in the coded runs, so every coded comparison is between receivers or waveforms at the same rate and SNR, never across rates. Uncoded percentages are means of per-channel reductions and need not equal one minus the ratio of the mean BERs shown in Table~\ref{tab:ber}. Every receiver sees the same bits, phase noise and thermal noise, and AFDM and OFDM see the same channel draws on 30 channels (seeds $13i+7$). For uncoded BER we report means of per-channel relative reductions, and for coded BLER one minus the ratio of mean BLERs, both with 95\% bootstrap confidence intervals over channels, and win counts (for BLER, wins are counted over \emph{informative} channels, those on which the two compared receivers do not tie, e.g.\ both error-free); $p$-values are two-sided Wilcoxon signed-rank tests, unadjusted, and $p=1.9\times10^{-9}$ is the smallest attainable at $n=30$ (a win on every channel). The preamble gain uses $4096$ known symbols ($K=16$ frames at $N=256$) and the M2M4 estimator the same number of unlabeled symbols. Code release is available upon acceptance.

For coding we use the industry-standard $(133,171)_8$, constraint-length-7 convolutional code at rates $1/2$ and $3/4$ (puncturing as in 802.11~\cite{yasuda1984}), one terminated codeword per symbol of $4N$ coded bits, a fixed random interleaver, max-log LLRs, and soft Viterbi decoding~\cite{viterbi1967}. Each receiver sets its LLR noise variance from the residual on the same 16 preamble frames. We run 64 codewords on each of 30 channels per point ($1920$ codewords), so block error rates (BLER) below about $1\%$ are rough.

\subsection{Gain Before Tracking}
\label{sec:order_results}
Table~\ref{tab:ber} lists the uncoded BER of every receiver at $24$~dB, and Table~\ref{tab:paired} the paired effects. At the default point ($\nu_{\max}=1$) the tracker alone cuts BER from $2.68\times10^{-2}$ to $1.36\times10^{-2}$ for AFDM (a $49\%$ reduction), a gain applied after the tracker reaches $1.15\times10^{-2}$, and the gain applied before it reaches $1.03\times10^{-2}$: relative to tracker-first, gain-first has $10.5\%$ lower BER, $95\%$ CI $[8.5,12.5]$, on $29/30$ channels. The effect is smaller for OFDM ($4.9\%$) and for lighter compression, as Section~\ref{sec:order} predicts.

\begin{table}[!t]
\renewcommand{\arraystretch}{1.05}
\setlength{\tabcolsep}{2.6pt}
\caption{Uncoded BER ($\times10^{-2}$), AFDM / OFDM, $N=256$, $\mathrm{IBO}=5$~dB, 24~dB, Mean of 30 Channels}
\label{tab:ber}
\centering
\footnotesize
\begin{tabular}{lccc}
\hline
\textbf{Receiver} & $\nu_{\max}=0.1$ & $\nu_{\max}=1$ & $\nu_{\max}=4$ \\
\hline
Plain LMMSE & 2.07 / 2.08 & 2.68 / 2.76 & 3.63 / 3.64 \\
Tracker only & 0.82 / 0.99 & 1.36 / 1.62 & 2.40 / 2.47 \\
Tracker $\to$ closed-form gain & 0.70 / 0.90 & 1.15 / 1.46 & 2.10 / 2.22 \\
\textbf{Closed-form gain $\to$ tracker} & \textbf{0.61} / 0.84 & \textbf{1.03} / 1.38 & \textbf{1.96} / 2.12 \\
Blind M2M4 gain $\to$ tracker & 0.62 / 0.85 & 1.03 / 1.38 & 1.94 / 2.10 \\
Preamble gain $\to$ tracker & \textbf{0.50} / 0.74 & \textbf{0.91} / 1.27 & \textbf{1.78} / 1.99 \\
\hline
Ideal hardware (LMMSE)$^{a}$ & 0.00 / 0.26 & 0.05 / 0.60 & 0.31 / 0.86 \\
\hline
\end{tabular}

\vspace{2pt}
\parbox{\columnwidth}{\scriptsize $^{a}$Linear PA, no phase noise, noiseless exact sounding, same 24~dB SNR.}
\end{table}

\begin{table*}[!t]
\renewcommand{\arraystretch}{1.05}
\setlength{\tabcolsep}{5pt}
\caption{Paired Uncoded Effects: Mean Relative BER Reduction (\%) [95\% CI] (Channels Won of 30), $24$~dB}
\label{tab:paired}
\centering
\footnotesize
\begin{tabular}{lcccc}
\hline
 & \multicolumn{2}{c}{\textbf{Gain-first vs.\ tracker-first}} & \textbf{Preamble vs.\ closed-form gain} & \textbf{AFDM vs.\ OFDM} \\
\textbf{Configuration} & AFDM & OFDM & AFDM & gain-first receiver \\
\hline
$\nu_{\max}=0.1$ & 12.1 [8.6, 15.9] (27) & 5.3 [2.7, 8.2] (25) & 12.4 [4.9, 21.2] (22) & 30.3 [21.1, 39.5] (26) \\
$\nu_{\max}=0.5$ & 11.1 [8.6, 13.7] (28) & 5.9 [3.6, 8.6] (26) & 11.2 [5.0, 18.4] (22) & 30.1 [22.6, 37.5] (28) \\
$\nu_{\max}=1$ (default) & 10.5 [8.5, 12.5] (29) & 4.9 [2.8, 7.3] (27) & 9.3 [4.2, 15.3] (23) & 30.8 [24.6, 37.0] (29) \\
$\nu_{\max}=2$ & 8.5 [6.8, 10.1] (29) & 4.6 [2.8, 6.6] (29) & 9.3 [3.8, 15.9] (25) & 22.8 [17.0, 28.9] (27) \\
$\nu_{\max}=4$ & 7.5 [5.9, 9.0] (29) & 5.0 [3.2, 7.3] (29) & 9.5 [4.0, 16.2] (28) & 13.8 [7.7, 20.6] (24) \\
\hline
IBO $3$~dB ($\nu_{\max}=1$) & 12.5 [10.6, 14.5] (30) & 9.4 [7.6, 11.4] (30) & 7.0 [4.0, 10.5] (28) & 6.1 [3.8, 8.5] (22) \\
IBO $8$~dB & 2.5 [$-$3.6, 6.6] (27) & 1.5 [$-$1.0, 4.1] (24) & 13.6 [6.2, 22.0] (23) & 50.8 [42.7, 58.7] (30) \\
$N=64$ & 5.3 [2.5, 7.9] (26) & 2.9 [0.2, 5.4] (24) & 21.6 [12.4, 32.2] (29) & 3.1 [$-$3.2, 9.5] (19) \\
$N=512$ & 12.4 [10.2, 14.5] (29) & 6.0 [4.4, 8.0] (29) & 6.0 [3.1, 9.4] (24) & 39.9 [33.2, 45.9] (29) \\
\hline
\end{tabular}
\end{table*}

Across Doppler the order effect for AFDM falls slowly from $12.1\%$ to $7.5\%$ and stays significant on $27$--$29$ of 30 channels; it is largest at moderate compression ($12.5\%$ at $3$~dB, $10.5\%$ at $5$~dB) and not significant at $8$~dB (CI includes zero), where compression is mild. It grows with the number of subcarriers ($5.3\%$ at $N=64$, $10.5\%$ at $256$, $12.4\%$ at $512$). In the toy setup of the appendices, the gain-first advantage of $19$--$39\%$ holds on 30 of 30 channels under every PA-on sounding model tried (independent or shared phase noise, clean or noisy, low or full pilot drive). A control separates compression from LMMSE shrinkage: with a linear PA the closed-form gain is unity, and a data-derived blind AGC applied before the tracker is $4.2\%$ \emph{worse} than after it, whereas with the PA on it is $16$--$35\%$ better (Appendix~\ref{sec:supp_nb50}). Together with the dependence on back-off, this points to PA compression as the cause. The effect is about twice as large for AFDM as for OFDM; Section~\ref{sec:afdm_vs_ofdm} shows that OCDM's order effect resembles AFDM's and DFT-s-OFDM's resembles OFDM's, which suggests a link to chirp-type spreading, but we have not isolated why.

\emph{Mechanism.} To test Section~\ref{sec:order}, we compare the two orders on identical frames (Table~\ref{tab:mech}). The error rate of the first hard decisions, before any tracking, is $8$--$10\%$ lower with gain-first (e.g.\ $7.4\%$ versus $8.1\%$ for AFDM at $\nu_{\max}=0.1$), and the tracker's phase estimate lies closer to the oracle common phase computed with the true symbols: $1.34^\circ$ versus $1.68^\circ$ RMS (AFDM, $\nu_{\max}=0.1$) and $1.64^\circ$ versus $2.04^\circ$ ($\nu_{\max}=1$), about $20\%$ lower, with the same pattern for OFDM. The first-decision improvement is modest and holds on only $19$--$20$ of 30 channels, so we regard the mechanism as supported, not proven.

\begin{table}[!t]
\renewcommand{\arraystretch}{1.05}
\setlength{\tabcolsep}{3pt}
\caption{Mechanism Check, Gain-First / Tracker-First, $N=256$, $\mathrm{IBO}=5$~dB, 24~dB, 30 Channels}
\label{tab:mech}
\centering
\footnotesize
\begin{tabular}{llccc}
\hline
 & & Phase error (deg) & First-decision & Final \\
 & & vs.\ oracle & SER (\%) & SER (\%) \\
\hline
AFDM & $\nu_{\max}=0.1$ & 1.34 / 1.68 & 7.35 / 8.12 & 2.38 / 2.74 \\
OFDM & $\nu_{\max}=0.1$ & 1.15 / 1.41 & 7.13 / 7.80 & 2.94 / 3.21 \\
AFDM & $\nu_{\max}=1$ & 1.64 / 2.04 & 9.48 / 10.53 & 4.09 / 4.55 \\
OFDM & $\nu_{\max}=1$ & 1.33 / 1.61 & 9.57 / 10.41 & 5.00 / 5.33 \\
\hline
\end{tabular}
\end{table}

\begin{figure*}[!t]
\centering
\includegraphics[width=\textwidth]{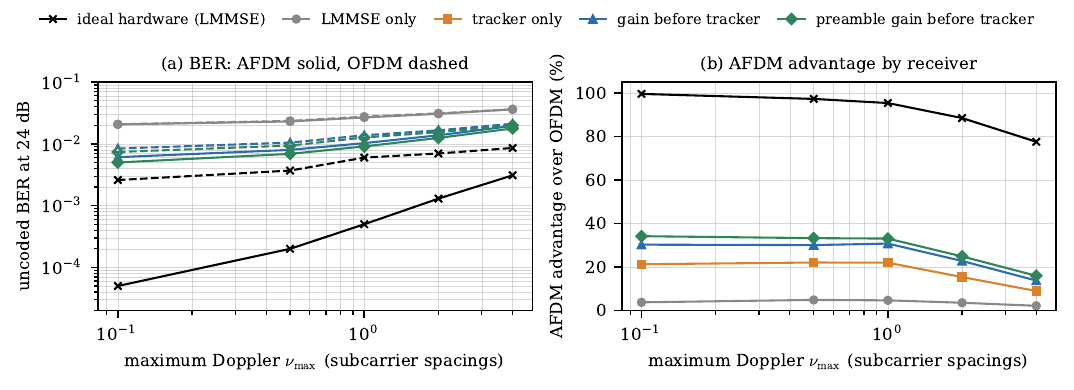}
\caption{Effect of maximum Doppler on uncoded BER at $24$~dB ($N=256$, $\mathrm{IBO}=5$~dB, 30 channels). (a) BER of AFDM (solid) and OFDM (dashed); the ideal-hardware AFDM point at $\nu_{\max}=0.1$ is below $5\times10^{-5}$ and is plotted there. (b) Relative BER advantage of AFDM over OFDM for each receiver; the advantage is $\ge 14\%$ for the gain-first receivers and does not grow with Doppler.}
\label{fig:doppler}
\end{figure*}

\subsection{Where the Gain Comes From}
\label{sec:gain_source}
The closed-form gain needs the PA model, but the gain \emph{value} is not unique. A blind M2M4 estimate computed from unlabeled symbols matches it within noise: its BER relative to the closed form is $+1.6\%$ ($[-1.8,5.4]$) for AFDM at the default point and its CI contains zero in every AFDM configuration we ran except $N=64$ (where it is $5.4\%$ better, $[0.5,10.7]$). The gain calibrated from the preamble does better in our idealized sounding, by $9.3\%$ ($[4.2,15.3]$) at the default point and $6$--$14\%$ across Table~\ref{tab:paired} at $N\ge256$; it needs 4096 known symbols on top of the sounding frames, so it is not a short preamble. The gain value is also not critical. Scaling the closed-form gain by $m\in[0.80,1.20]$ gives a flat optimum at $m=1.00$ for AFDM and OFDM at $\nu_{\max}=0.1$ and $1$; BER is within a few percent of it for $|m-1|\le0.05$ and $64$--$84\%$ worse at $m=0.80$, $32$--$62\%$ worse at $m=1.20$. The genie complex scalar that best maps the equalizer output to the true symbols has magnitude $0.962$ of the closed form (s.d.\ $0.003$--$0.004$ over channels), a systematic $3.8\%$ gap that we attribute to the pilots' own compression: a closed form that also accounts for it is $1.5\%$ from the genie value, but the BER basin is flat, so it never helped and we drop it. The genie scalar's phase has zero mean ($-0.1^\circ$) and a $5.6^\circ$ standard deviation across channels, consistent with a random static offset left by sounding with a shared phase-noise realization, which a complex preamble gain removes. Both this offset and the preamble's advantage depend on how sounding is done, and in-frame sounding could change them.

\emph{Comparison with pilots.} A PT-RS-style baseline inserts 16 known unit-power QPSK symbols per frame (every 16th position, $6.25\%$ overhead); every receiver in Table~\ref{tab:ptrs} is scored on the remaining 240 data symbols, and gain-first uses no pilots. A joint complex gain estimated by least squares from the pilots ties gain-first: gain-first has $9.2\%$ ($[-38.2,13.0]$) and $1.3\%$ ($[-17.4,11.1]$) \emph{higher} mean BER for AFDM at $\nu_{\max}=0.1$ and $1$, and $13.9\%$ and $0.6\%$ for OFDM, all intervals containing zero. Adding the tracker after the pilot gain, or using the closed-form gain with the pilot phase, lowers BER by $4$--$22\%$ relative to gain-first, with intervals that mostly contain zero. A pilot-phase-only estimate with no gain correction is $3.6$--$17\%$ worse than gain-first (significant at $\nu_{\max}=1$: $16.9\%$ and $9.1\%$), and the tracker alone is $14.5$--$29.8\%$ worse (28--30 of 30 channels). A genie per-frame scalar, an upper bound for any scalar method, is $7$--$31\%$ better. So the closed-form gain gives roughly what a pilot-based joint gain gives, without the overhead; getting the amplitude right is what matters, and the closed-form gain with the pilot phase is the best practical variant here.

\begin{table}[!t]
\renewcommand{\arraystretch}{1.05}
\setlength{\tabcolsep}{2.2pt}
\caption{Uncoded BER ($\times10^{-2}$) With 16 Pilots (6.25\% Overhead), 240 Data Symbols, $N=256$, $\mathrm{IBO}=5$~dB, 24~dB}
\label{tab:ptrs}
\centering
\scriptsize
\begin{tabular}{llcccccccc}
\hline
 & $\nu_{\max}$ & LMMSE & Tracker & \textbf{Gain-first} & Pilot & Pilot & Pilot & Closed & Genie \\
 & & & & \textbf{(no pilots)} & phase & gain & gain+trk & gain+ph & gain \\
\hline
AFDM & 0.1 & 2.08 & 0.83 & \textbf{0.60} & 0.70 & 0.59 & 0.54 & 0.53 & 0.51 \\
OFDM & 0.1 & 2.07 & 0.99 & \textbf{0.83} & 0.87 & 0.80 & 0.77 & 0.76 & 0.75 \\
AFDM & 1 & 2.70 & 1.38 & \textbf{1.05} & 1.25 & 1.08 & 1.00 & 0.99 & 0.95 \\
OFDM & 1 & 2.77 & 1.61 & \textbf{1.36} & 1.50 & 1.38 & 1.34 & 1.31 & 1.30 \\
\hline
\end{tabular}
\end{table}

The theory's contribution is therefore the structure (one scalar on all components, to be applied before the tracker), not the particular closed-form value.

\emph{HWI-aware LMMSE.} We also revisited the hardware-aware LMMSE of~\cite{sui2026hwi}, $\hat{\mathbf{s}}=\alpha\hat{\mathbf{H}}^H(\alpha^2\hat{\mathbf{H}}\hat{\mathbf{H}}^H+(\sigma_d^2+\sigma_n^2)\mathbf{I})^{-1}\mathbf{y}$ with a real Bussgang gain. Implemented with the regularizer $\sigma_d^2\mathbf{I}$ it loses by $21$--$78\%$ BER to the closed-form gain-first receiver in the toy setup, but that implementation mis-specifies the distortion covariance. With the covariance $(\alpha^2+\sigma_d^2)\hat{\mathbf{H}}\hat{\mathbf{H}}^H+\sigma_n^2\mathbf{I}$ and the bias removed it comes within about $2\%$ of the closed form, because the corrected receiver inverts the compression before any decision is made, i.e., it is also gain-first. We therefore do not claim a gain over a correctly specified HWI-aware LMMSE; what the tracker and the ordering add is the handling of phase noise (Appendix~\ref{sec:supp_nb50}).

\subsection{AFDM versus OFDM}
\label{sec:afdm_vs_ofdm}
With ideal hardware (last row of Table~\ref{tab:ber}), AFDM's BER is $77$--$99.6\%$ below OFDM's (mean of per-channel reductions) at every Doppler; this is the diversity advantage of spreading. Hardware impairments erode it: with a plain LMMSE receiver the edge is only $4.7\%$ at $\nu_{\max}=1$, and it recovers to $22.0\%$ with a tracker, $30.8\%$ gain-first, and $33.1\%$ with the preamble gain ($[27.0,39.0]$; $p=3.7\times10^{-9}$ for gain-first). So the spreading advantage is mostly unrealized without a hardware-aware receiver.

The edge does not grow with Doppler: it is $30\%$ already at $\nu_{\max}=0.1$ and falls to $14\%$ at $\nu_{\max}=4$, so these data do not support choosing AFDM because Doppler is large. It also depends on the operating point: at $N=64$, the size of our earlier toy setup, it is not significant ($3.1\%$, CI includes zero), at $N=512$ it is $39.9\%$; at IBO $3$~dB, where the PA's EVM violates the NR limit, it shrinks to $6.1\%$; at $8$~dB it is $50.8\%$.

\emph{Spreading controls.} We attribute the edge to spreading: with six delayed paths, any transform that spreads each symbol across the frequency-selective channel gives delay diversity even at negligible Doppler, whereas OFDM's per-subcarrier fades are not averaged by an uncoded one-symbol decision. To test this we added OCDM (the DAFT with $c_1=c_2=1/(2N)$) and DFT-s-OFDM with full-band spreading, a single carrier with a cyclic prefix, on identical channels and hardware (Table~\ref{tab:spread}). With ideal hardware all three spreading waveforms are far better than OFDM ($95$--$99.6\%$, $60$--$95\%$ and $58$--$93\%$ lower BER for AFDM, OCDM and DFT-s-OFDM). With real hardware, OCDM obtains $26.2\%$ (gain-first, $\nu_{\max}=0.1$) and $12.3\%$ ($\nu_{\max}=1$) against AFDM's $30.3\%$ and $30.8\%$, so AFDM's chirp rate may help once Doppler is large, but DFT-s-OFDM obtains $30.8\%$ and $26.6\%$ and its mean BER with the preamble gain is $19\%$ below AFDM's at $\nu_{\max}=0.1$ and $2.5\%$ below at $\nu_{\max}=1$. So under our PA model DFT-s-OFDM matches AFDM at equal IBO. Two cautions apply: its lower peak-to-average power ratio means it is compressed less at equal back-off, and the closed-form gain assumes a Gaussian input and is the wrong constant for it, so we compare it through the preamble columns; and we did not compute paired intervals for AFDM versus DFT-s-OFDM. The order effect is larger for the chirp waveforms ($12.1$ and $10.5\%$ for AFDM, $11.0$ and $7.1\%$ for OCDM) than for OFDM ($5.3$ and $4.9\%$) or DFT-s-OFDM ($4.7\%$, CI includes zero, and $4.6\%$).

\begin{table*}[!t]
\renewcommand{\arraystretch}{1.05}
\setlength{\tabcolsep}{5pt}
\caption{Spreading Controls, Real Hardware: Uncoded BER ($\times10^{-2}$) at 24~dB, Edge Over OFDM for the Gain-First Receiver [95\% CI] (Channels Won of 30), and Gain-First vs.\ Tracker-First Order Effect (\%)}
\label{tab:spread}
\centering
\footnotesize
\begin{tabular}{llcccccc}
\hline
$\nu_{\max}$ & Waveform & BER gain-first & BER preamble & Edge over OFDM (\%) & Order effect (\%) & Ideal-hardware edge (\%) \\
\hline
0.1 & AFDM & 0.61 & 0.50 & 30.3 [21.1, 39.5] (26) & 12.1 [8.6, 15.9] & 99.6 \\
 & OCDM & 0.66 & 0.55 & 26.2 [17.3, 35.0] (26) & 11.0 [6.7, 15.1] & 94.6 \\
 & DFT-s-OFDM & 0.57 & 0.40 & 30.8 [14.7, 44.7] (25) & 4.7 [$-$2.0, 10.3] & 93.3 \\
 & OFDM & 0.84 & 0.74 & --- & 5.3 [2.7, 8.2] & --- \\
\hline
1 & AFDM & 1.03 & 0.91 & 30.8 [24.6, 37.0] (29) & 10.5 [8.5, 12.5] & 95.3 \\
 & OCDM & 1.23 & 1.12 & 12.3 [6.7, 17.5] (25) & 7.1 [5.2, 9.1] & 59.9 \\
 & DFT-s-OFDM & 1.05 & 0.89 & 26.6 [18.8, 34.2] (27) & 4.6 [2.4, 6.7] & 58.0 \\
 & OFDM & 1.38 & 1.27 & --- & 4.9 [2.8, 7.3] & --- \\
\hline
\end{tabular}
\end{table*}

\subsection{Coded Performance}
\label{sec:coded}
Channel coding lets OFDM recover diversity across subcarriers, so the uncoded edge need not survive. Fig.~\ref{fig:bler}, Table~\ref{tab:codedpaired} and Table~\ref{tab:snr10} give the result. At rate $3/4$, AFDM keeps and grows its edge over OFDM (the spreading controls of Section~\ref{sec:afdm_vs_ofdm} were run uncoded only, so we do not know whether DFT-s-OFDM would match it here): at $\mathrm{IBO}=5$~dB the SNR needed for $10\%$ BLER with gain-first is $18.3$~dB for AFDM and $21.0$~dB for OFDM; at $8$~dB it is $17.1$ versus $19.3$~dB; at $3$~dB OFDM does not reach $10\%$ by $24$~dB, AFDM does ($21.3$~dB). AFDM's BLER is lower by $33.5\%$ at $16$~dB, $59.3\%$ at $20$~dB and $52.8\%$ at $24$~dB for $\mathrm{IBO}=5$~dB (gain-first), with all CIs far above zero. At rate $1/2$ the SNR gap at $10\%$ BLER is only $0$--$0.5$~dB in every configuration (AFDM's BLER is still $11$--$18\%$ lower at $12$~dB, but the waterfall is steep, and the CI at $\mathrm{IBO}=8$~dB includes zero), so a strong low-rate code recovers most of the diversity that OFDM lacked; AFDM's uncoded advantage should not be assumed to carry over to a well-coded link.

\begin{figure*}[!t]
\centering
\includegraphics[width=\textwidth]{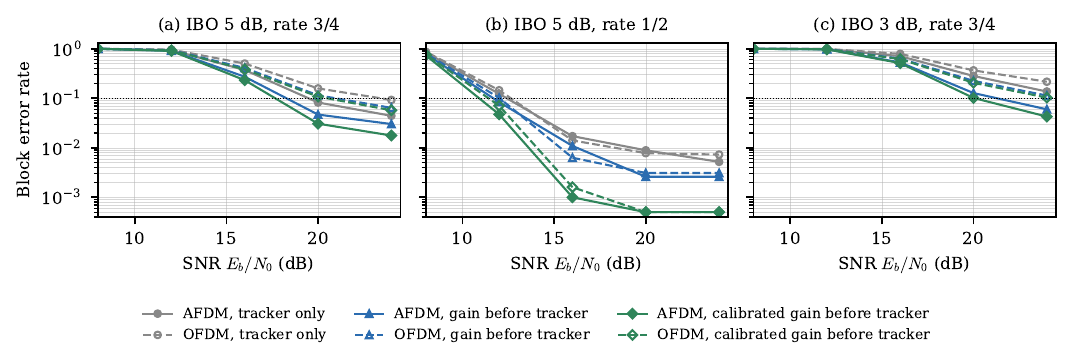}
\caption{BLER versus SNR ($E_b/N_0$ of the uncoded 16-QAM) for the $(133,171)_8$ convolutional code, 30 channels $\times$ 64 codewords, $N=256$, $\nu_{\max}=1$. Solid: AFDM; dashed, open markers: OFDM. Zero errors are plotted at the $5\times10^{-4}$ floor. Dotted line: $10\%$ BLER.}
\label{fig:bler}
\end{figure*}

\begin{table*}[!t]
\renewcommand{\arraystretch}{1.05}
\setlength{\tabcolsep}{5pt}
\caption{Paired Coded Effects: Mean Relative BLER Reduction (\%) [95\% CI] (Wins / Informative Channels)}
\label{tab:codedpaired}
\centering
\footnotesize
\begin{tabular}{llcccc}
\hline
 & & \multicolumn{2}{c}{\textbf{Gain-first vs.\ tracker-first}} & \textbf{Preamble vs.\ closed form} & \textbf{AFDM vs.\ OFDM} \\
\textbf{IBO, rate} & \textbf{SNR} & AFDM & OFDM & AFDM & gain-first receiver \\
\hline
5 dB, 3/4 & 16 dB & 9.2 [6.5, 12.1] 23/25 & 3.3 [1.2, 5.4] 17/25 & 14.7 [7.4, 22.7] 18/24 & 33.5 [25.4, 41.9] 27/27 \\
 & 20 dB & 24.4 [15.4, 32.6] 15/16 & 5.6 [$-$1.2, 12.6] 11/18 & 34.4 [18.7, 46.3] 11/12 & 59.3 [46.8, 72.7] 26/28 \\
 & 24 dB & 15.9 [2.2, 25.0] 9/12 & 11.5 [1.9, 20.6] 8/11 & 41.4 [21.1, 59.6] 11/12 & 52.8 [33.9, 68.7] 18/21 \\
3 dB, 3/4 & 16 dB & 9.7 [6.0, 14.1] 24/26 & 3.7 [2.0, 5.6] 21/25 & 3.0 [0.8, 5.5] 13/18 & 16.4 [11.1, 22.4] 26/28 \\
 & 20 dB & 23.9 [18.2, 30.3] 25/26 & 14.6 [10.6, 18.6] 22/24 & 20.4 [8.7, 30.4] 16/20 & 43.3 [34.0, 53.0] 29/30 \\
 & 24 dB & 27.7 [19.0, 35.9] 19/19 & 13.8 [6.6, 20.4] 15/19 & 28.7 [13.8, 43.4] 13/16 & 47.2 [36.6, 57.9] 23/25 \\
8 dB, 3/4 & 16 dB & 4.5 [0.3, 9.0] 11/16 & 0.0 [$-$2.0, 1.9] 9/17 & 22.6 [12.7, 31.9] 16/17 & 47.7 [37.6, 58.2] 30/30 \\
 & 20 dB & 11.8 [$-$2.6, 18.8] 5/7 & 5.0 [0.0, 10.2] 9/11 & 46.7 [17.8, 66.2] 9/10 & 60.8 [45.3, 76.8] 21/22 \\
 & 24 dB & 13.3 [$-$11.8, 29.4] 4/7 & 7.9 [$-$4.2, 17.9] 8/13 & 56.4 [30.8, 75.0] 9/10 & 52.4 [30.6, 72.1] 19/23 \\
\hline
5 dB, 1/2 & 12 dB & 13.0 [5.6, 21.9] 15/17 & 9.3 [3.8, 15.1] 13/16 & 43.5 [27.0, 58.0] 19/20 & 17.9 [7.8, 29.9] 18/25 \\
3 dB, 1/2 & 12 dB & 15.8 [12.0, 20.5] 23/24 & 13.5 [9.8, 17.4] 21/22 & 30.0 [17.7, 40.8] 18/21 & 11.1 [1.5, 21.2] 16/25 \\
8 dB, 1/2 & 12 dB & 7.3 [1.1, 12.7] 5/6 & 5.4 [$-$3.8, 12.6] 7/11 & 59.4 [38.8, 73.1] 16/18 & 17.2 [$-$2.6, 37.8] 10/17 \\
\hline
\end{tabular}
\end{table*}

\begin{table}[!t]
\renewcommand{\arraystretch}{1.05}
\setlength{\tabcolsep}{3pt}
\caption{SNR (dB) for $10\%$ BLER, AFDM / OFDM}
\label{tab:snr10}
\centering
\footnotesize
\begin{tabular}{llcccc}
\hline
\textbf{IBO} & \textbf{Rate} & Tracker & Gain after & Gain before & Preamble \\
\hline
3 dB & 1/2 & 14.2 / 14.3 & 13.4 / 13.3 & 12.9 / 12.9 & 12.2 / 12.5 \\
3 dB & 3/4 & $>$24 / $>$24 & 22.9 / $>$24 & 21.3 / $>$24 & 20.1 / $>$24 \\
5 dB & 1/2 & 12.4 / 12.6 & 11.9 / 12.2 & 11.7 / 12.0 & 10.9 / 11.4 \\
5 dB & 3/4 & 19.5 / 23.4 & 18.8 / 21.5 & 18.3 / 21.0 & 17.7 / 20.5 \\
8 dB & 1/2 & 11.2 / 11.5 & 11.1 / 11.3 & 11.0 / 11.2 & 10.1 / 10.5 \\
8 dB & 3/4 & 17.5 / 19.6 & 17.3 / 19.4 & 17.1 / 19.3 & 16.4 / 18.8 \\
\hline
\end{tabular}
\end{table}

The order effect survives coding. At rate $3/4$ and $\mathrm{IBO}=5$~dB, gain-first lowers AFDM's BLER by $9.2\%$ at $16$~dB (23 of 25 informative channels) and $24.4\%$ at $20$~dB (15 of 16), which is $0.5$~dB at $10\%$ BLER (at rate $1/2$ it is $0.2$~dB at $5$~dB back-off and $0.5$~dB at $3$~dB); at $3$~dB and rate $3/4$ it is $1.6$~dB ($9.7$, $23.9$ and $27.7\%$ at $16$, $20$, $24$~dB). At $8$~dB it is $0.2$~dB and not significant at higher SNR. For OFDM the effect is smaller in every row of Table~\ref{tab:codedpaired}. Against the tracker alone, gain-first saves $1.2$~dB at $\mathrm{IBO}=5$~dB.

The preamble gain is worth a further $0.6$~dB at $\mathrm{IBO}=5$~dB ($1.2$~dB at $3$~dB) and lowers AFDM's BLER by $14.7$, $34.4$ and $41.4\%$ at $16$, $20$ and $24$~dB relative to the closed form. In our idealized sounding it also removes an error floor: at rate $1/2$ and $24$~dB the tracker-only, tracker-first and gain-first receivers floor at $0.3$--$1\%$ BLER, while the preamble receiver has no errors in $1920$ AFDM codewords at $\mathrm{IBO}=5$~dB. The floor may be an artifact of the quasi-static sounding rather than a property of the receivers. The preamble gain's benefit does not shrink with a lighter PA (it is larger at $8$~dB than at $5$~dB), which is consistent with a PA-independent cause such as the static sounding phase offset of Section~\ref{sec:gain_source}; we did not isolate it, and it needs confirmation with in-frame sounding.

%% ---------------------------------------------------------------
\section{Conclusion and Limitations}
\label{sec:conclusion}
We showed that AFDM's equimodular transform reduces PA nonlinearity to one gain on all components and phase noise to a common rotation, so OFDM's phase tracker applies unchanged, and that the gain should be corrected first. In a six-path, fractional-Doppler, 256-subcarrier setting with an NR-compliant PA, gain-first lowers uncoded BER by $7.5$--$12\%$ for AFDM and about $5\%$ for OFDM (not significant at $8$~dB back-off) and, at rate $3/4$, gives $0.5$~dB at $10\%$ BLER; its first decisions are $8$--$10\%$ better and its phase estimate about $20\%$ closer to the oracle. A blind moment estimate matches the closed form, and the closed form, with no pilots, performs about as well as a pilot-based joint complex gain that spends $6.25\%$ of the frame; the best practical variant combines the closed-form gain with a pilot phase. AFDM keeps a $14$--$31\%$ uncoded BER edge over OFDM and $2$--$3$~dB at rate $3/4$ under identical hardware, but the edge does not grow with Doppler, nearly vanishes at rate $1/2$, and is mostly spreading: OCDM and DFT-s-OFDM obtain much of it, and DFT-s-OFDM matches AFDM at equal back-off. We therefore do not claim an AFDM advantage over other spreading waveforms; the portable result is the gain-before-tracking correction, which applies to AFDM, OCDM and OFDM alike.

\emph{Limitations.} Results are simulated, with one code family (convolutional; LDPC and polar codes are untested), 64 codewords per channel, and $\nu_{\max}=1$, $N=256$ for the coded study. The PA is memoryless and symbol-rate, without oversampling, spectral regrowth or memory. Sounding is idealized: $N$ pilot frames that share one phase-noise realization, quasi-static, with overhead not charged, so the preamble-gain advantage, the static-offset explanation and the removal of the rate-1/2 error floor may partly be artifacts of it, and a practical in-frame estimator with continuous Doppler phase is left open. The order effect is measured at one PT-RS density, and the DFT-s-OFDM and OCDM controls are uncoded, unpaired against AFDM and, for DFT-s-OFDM, confounded by PAPR. The reason the order effect is larger for chirp waveforms than for OFDM is untested. We report BER and BLER at fixed back-off, not total degradation, and the PA has no oversampling or spectral regrowth. We do not compare with digital predistortion~\cite{dpd_survey2010} or OTFS~\cite{hadani2017otfs}. Results from an earlier toy setup on a learned refinement, MIMO and 3GPP TDL-C channels used the tracker-first order and are not included in this version.

%% ---------------------------------------------------------------
\section*{Acknowledgment}
The authors acknowledge the use of Anthropic's Claude to assist with language editing and with preparing simulation notebooks. All core ideas, methodology and results were developed by the authors, who reviewed and verified the final text in full for technical accuracy.

\makeatletter
\renewcommand{\IEEEbibitemsep}{0pt plus 0.3pt}
\makeatother

%% ================================================================
%% Supplementary material
%% ================================================================
\clearpage
\newgeometry{left=1.05in,right=1.05in,top=0.85in,bottom=1in,headsep=0.25in}
\onecolumn
\setlength{\columnwidth}{\textwidth}\setlength{\hsize}{\textwidth}\setlength{\linewidth}{\textwidth}
\appendices
\setcounter{figure}{0}\setcounter{table}{0}
\renewcommand{\thefigure}{S\arabic{figure}}
\renewcommand{\thetable}{S\arabic{table}}
\begin{center}
{\LARGE Supplementary Material}\\[8pt]
{\large Gain Before Tracking: Compensating Power-Amplifier Nonlinearity\\ and Phase Noise in AFDM Uplinks}
\end{center}
\vspace{10pt}
The appendices are organized as follows.
\begin{itemize}
\item Appendices~\ref{sec:supp_lemma1}--\ref{sec:supp_prop1}: proofs of Lemma~\ref{lem:equimod}, Theorem~\ref{thm:kernel}(c) and Proposition~\ref{prop:uniform}.
\item Appendix~\ref{sec:supp_dfstar}: derivation of the linewidth threshold $\Delta f^*$.
\item Appendix~\ref{sec:supp_realistic}: protocol and configuration of the experiments, and the sounding-model and linear-PA controls behind Section~\ref{sec:order_results}.
\item Appendix~\ref{sec:supp_diag}: additional diagnostics, namely the gain sweep, the mechanism check, the PT-RS-style baseline and the spreading controls (OCDM, DFT-s-OFDM).
\end{itemize}
An extended supplement (an earlier toy-setup study with a learned refinement, OAMP-Net2 and DPD baselines, MIMO and LDPC results, 3GPP TDL-C validation, and the matrix Bussgang theorem for per-antenna PAs) is available from the authors.
All numerical claims were produced by the simulation code, which is available upon acceptance.  

%% ==================================================================
\FloatBarrier
\section{Proof of Lemma~1: Equimodularity of the DAFT Basis}
\label{sec:supp_lemma1}
%% ==================================================================

\textbf{Lemma~1 restated.}
Every entry of $\mathbf{A}^H$ has modulus $1/\sqrt{N}$. Consequently
each time-domain sample of an AFDM frame is a sum of $N$ equal-weight
independent terms, and, as approximations for large $N$, \emph{(i)} the
distribution of the AFDM time-domain signal is close to complex Gaussian,
with the Berry--Esseen error decaying as $O(N^{-1/2})$ (the constant
depends on the phases of the weights and is not claimed to equal OFDM's);
\emph{(ii)} the marginal distribution of every sample, and hence the PAPR
statistics, are approximately OFDM's; \emph{(iii)} the Bussgang decomposition
applies to AFDM with approximately the accuracy it has for OFDM. The
empirical check below, not the argument, is what supports (ii)--(iii).

\begin{IEEEproof}
By the DAFT factorisation, $\mathbf{A}^H = \boldsymbol{\Lambda}_{c_1}^H
\mathbf{F}^H\boldsymbol{\Lambda}_{c_2}^H$ with
$\boldsymbol{\Lambda}_{c_1}^H = \mathrm{diag}(e^{+j2\pi c_1 n^2})$ and
$\boldsymbol{\Lambda}_{c_2}^H = \mathrm{diag}(e^{+j2\pi c_2 k^2})$ both
diagonal unitary and $\mathbf{F}^H$ the unitary IDFT. Hence
\begin{equation*}
\begin{aligned}
|[\mathbf{A}^H]_{nk}|
&= |e^{j2\pi c_1 n^2}|\cdot|[\mathbf{F}^H]_{nk}|\cdot|e^{j2\pi c_2 k^2}|\\
&= 1\cdot\frac{1}{\sqrt{N}}\cdot 1 = \frac{1}{\sqrt{N}},
\end{aligned}
\end{equation*}
since $|[\mathbf{F}^H]_{nk}| = 1/\sqrt{N}$ for every entry of the
unitary IDFT matrix and both diagonal factors are unimodular.

Each time-domain sample $s_n = \sum_k [\mathbf{A}^H]_{nk} x_k$ is
therefore a sum of $N$ terms of equal modulus $1/\sqrt{N}$ weighting
i.i.d.\ (or independent, for fixed QAM symbols) inputs $x_k$ --
structurally identical to the OFDM/IDFT superposition. Items (i)--(ii)
follow immediately by the classical Berry--Esseen / central-limit-theorem
argument used for the IDFT modulator~\cite{ochiai2001papr,dardari2000theoretical}:
the moduli of the weights are the same for $\mathbf{A}^H$ and
$\mathbf{F}^H$, and the central-limit argument gives a Gaussian limit for
both, although the Berry--Esseen constant for the real and imaginary parts
depends on the weight phases, which differ. Item (iii) is then an
approximation: the Bussgang decomposition is accurate to the extent that
the per-sample distribution is close to circular Gaussian, which is
close for both waveforms at the $N$ used here.
\end{IEEEproof}

\textbf{Empirical validation.} Simulation confirms the Gaussian
approximation numerically: the fourth-moment ratio is $1.999$ for
AFDM (OFDM: $2.002$; circular Gaussian: $2.000$), and mean PAPR is
$6.73$~dB (OFDM: $6.74$~dB). Over $10^4$ frames, AFDM's $1\%/0.1\%$
PAPR percentiles are $9.35/10.31$~dB (OFDM: $9.41/10.34$~dB), and the
maxima ($10.89$ vs.\ $11.66$~dB) show no heavier tail for AFDM---the
full CCDF matches OFDM closely, with no loss of Bussgang accuracy.

%% ==================================================================
\FloatBarrier
\section{Proof of Theorem~1(c): Converses of the Kernel
Characterisation}
\label{sec:supp_theorem1}
%% ==================================================================

\textbf{Theorem~1(c) restated.}
Let $\mathbf{U}$ be an $N\times N$ unitary matrix and write
$\mathbf{G}_{\mathbf{U}} \triangleq \mathbf{U}\boldsymbol{\Phi}
\mathbf{U}^H$ for diagonal $\boldsymbol{\Phi}$.
\begin{enumerate}
\item[(c1)] If $\mathrm{diag}(\mathbf{G}_{\mathbf{U}}) = J_0\mathbf{1}$
  for every diagonal unitary $\boldsymbol{\Phi}$, then $\mathbf{U}$
  is equimodular ($|U_{k,n}| = 1/\sqrt{N}$ for all $k,n$).
\item[(c2)] If $|[\mathbf{G}_{\mathbf{U}}]_{kl}| =
  |[\mathbf{G}_{\mathbf{F}}]_{kl}|$ for all $k,l$ and all diagonal
  unitary $\boldsymbol{\Phi}$, then
  $\mathbf{U} \in \mathfrak{A}$, i.e.\
  $\mathbf{U} = \boldsymbol{\Lambda}_2 \mathbf{F} \boldsymbol{\Lambda}_1$
  for diagonal unitaries $\boldsymbol{\Lambda}_1, \boldsymbol{\Lambda}_2$.
\end{enumerate}

\begin{IEEEproof}
\textbf{Proof of (c1) — single-site test.}
Choose the single-site impairment
$\boldsymbol{\Phi}^{(n_0,\alpha)} =
\mathbf{I} + (e^{j\alpha}-1)\mathbf{e}_{n_0}\mathbf{e}_{n_0}^H$,
i.e.\ $\Phi_n = 1+(e^{j\alpha}-1)\delta_{n,n_0}$.
Then for any $k$,
\begin{equation}
[\mathbf{G}_{\mathbf{U}}]_{kk}
  = \sum_{n}|U_{kn}|^2\Phi_n
  = 1 + (e^{j\alpha}-1)|U_{k,n_0}|^2.
\label{eq:single_site_diag}
\end{equation}
The hypothesis $\mathrm{diag}(\mathbf{G}_{\mathbf{U}})=J_0\mathbf{1}$
requires this to equal $\frac{1}{N}\sum_n\Phi_n =
1+(e^{j\alpha}-1)/N$ for \emph{every} $k$ and every $\alpha$.
Since $(e^{j\alpha}-1)\neq 0$ for $\alpha\notin 2\pi\mathbb{Z}$,
we need $|U_{k,n_0}|^2 = 1/N$ for every $k$ and $n_0$.
As $n_0$ ranges over $\{0,\ldots,N-1\}$ this gives equimodularity.
$\hfill\square$

\textbf{Proof of (c2) — two-site phase condition and cocycle
factorisation.}
Fix $n_1\neq n_2$ and take
\begin{equation*}
\Phi_n = 1 + (e^{j\alpha}-1)\delta_{n,n_1}
           + (e^{j\beta} -1)\delta_{n,n_2},
\end{equation*}
so $a=e^{j\alpha}-1$, $b=e^{j\beta}-1$. For $k\neq l$,
\begin{equation}
[\mathbf{G}_{\mathbf{U}}]_{kl}
  = \sum_n U_{kn}\Phi_n U_{ln}^*
  = [\mathbf{G}_{\mathbf{U}}^{(\Phi\equiv 1)}]_{kl}
    + a\,z_1 + b\,z_2
\label{eq:two_site_off}
\end{equation}
where $z_i = U_{k,n_i}U_{l,n_i}^*$.
For the DFT baseline the same expression holds with
$w_i = \frac{1}{N}e^{-j2\pi(k-l)n_i/N}$.
The hypothesis requires $|az_1+bz_2|=|aw_1+bw_2|$ for every
$\alpha,\beta\in\mathbb{R}$ \emph{varied independently} (not merely
along the diagonal $\beta=\alpha$, which forces $b=a$ and collapses
the two free parameters to one). Squaring gives
\begin{equation*}
\begin{aligned}
&|a|^2|z_1|^2+|b|^2|z_2|^2+2\,\mathrm{Re}(a\bar{b}\,z_1\bar{z}_2)\\
&\qquad = |a|^2|w_1|^2+|b|^2|w_2|^2+2\,\mathrm{Re}(a\bar{b}\,w_1\bar{w}_2),
\end{aligned}
\end{equation*}
required to hold for every such $(\alpha,\beta)$. We extract the
content in three steps that vary $\alpha$ and $\beta$ independently.

\emph{Step 1 ($|z_1|=|w_1|$).} Fix $\beta=0$ (so $b=0$) and let
$\alpha$ range over $\mathbb{R}\setminus2\pi\mathbb{Z}$ (so
$a\neq0$): the identity reduces to $|a|^2|z_1|^2=|a|^2|w_1|^2$,
giving $|z_1|=|w_1|=1/N$.

\emph{Step 2 ($|z_2|=|w_2|$).} Symmetrically, fix $\alpha=0$ and
vary $\beta$ to get $|z_2|=|w_2|=1/N$.

\emph{Step 3 (the cross term).} With Steps 1--2 the $|a|^2$ and
$|b|^2$ terms cancel identically on both sides, leaving
\begin{equation*}
\mathrm{Re}\big(a\bar{b}\,\Delta\big)=0,
\qquad
\Delta \triangleq z_1\bar{z}_2-w_1\bar{w}_2,
\end{equation*}
for every $\alpha,\beta$ varied independently. Evaluating at the
four independently-chosen pairs
$(\alpha,\beta)\in\{(\pi/2,\pi/2),(\pi/2,\pi),(\pi,\pi/2),(\pi,\pi)\}$,
for which $a,b\in\{j-1,\,-2\}$, gives
$a\bar{b}\in\{2,\ 2-2j,\ 2+2j,\ 4\}$ respectively. The pairs
$a\bar{b}=2$ and $a\bar{b}=2+2j$ yield $\mathrm{Re}(\Delta)=0$ and
$\mathrm{Re}(\Delta)-\mathrm{Im}(\Delta)=0$; together these force
$\Delta=0$, i.e.\ $z_1\bar{z}_2=w_1\bar{w}_2$:
\begin{equation}
U_{k,n_1}U_{l,n_1}^*\,\overline{U_{k,n_2}U_{l,n_2}^*}
  = \frac{1}{N^2}e^{-j2\pi(k-l)(n_1-n_2)/N}.
\label{eq:cocycle_cond}
\end{equation}
This holds for all $k\neq l$ and all pairs $(n_1,n_2)$; Steps 1--2
also reprove (c1)'s magnitude conclusion along the way.

\textbf{Cocycle factorisation.}
Define $\mu_k(n) = \sqrt{N}\,U_{k,n}\,e^{j2\pi kn/N}$.
From (c1), $|\mu_k(n)|=1$ for all $k,n$.
Equation~\eqref{eq:cocycle_cond} with fixed $n_2=0$ gives
\begin{equation*}
\mu_k(n_1)\overline{\mu_l(n_1)} = \mu_k(0)\overline{\mu_l(0)}
\quad\forall\, k\neq l,\; n_1.
\end{equation*}
Setting $l=0$ as reference: $\mu_k(n)/\mu_0(n)$ is independent
of $n$ for every $k$.  Write $c_k = \mu_k(n)/\mu_0(n)$; then
\begin{equation*}
U_{k,n} = \frac{1}{\sqrt{N}}\,e^{-j2\pi kn/N}\,c_k\,\mu_0(n)
  = [\boldsymbol{\Lambda}_2\mathbf{F}\boldsymbol{\Lambda}_1]_{k,n}
\end{equation*}
with $\boldsymbol{\Lambda}_2=\mathrm{diag}\{c_k\}$ and
$\boldsymbol{\Lambda}_1=\mathrm{diag}\{\mu_0(n)\}$, both diagonal
unitary.  Hence $\mathbf{U}\in\mathfrak{A}$.
$\hfill\blacksquare$
\end{IEEEproof}

\begin{remark}
The converse (c2) identifies the affine-Fourier family $\mathfrak{A}$
as the \emph{unique} class of unitary modulators for which the
full entrywise pattern of phase-noise leakage magnitudes coincides
with OFDM's.  Any unitary modulator outside $\mathfrak{A}$ has
at least one off-diagonal leakage entry with a different magnitude
than the DFT produces, so its CPE tracker and ICI power formula
must be re-derived.  The DAFT and OCDM both belong to $\mathfrak{A}$
(diagonal unitaries composed with the DFT), while OTFS in the
symplectic domain does not in general.
\end{remark}

%% ==================================================================
\FloatBarrier
\section{Proof of Proposition~1: Subcarrier-Uniformity of
DAFT-Domain Distortion}
\label{sec:supp_prop1}
%% ==================================================================

\textbf{Proposition~1 restated.}
For i.i.d.\ $M$-QAM symbols, $\mathbb{E}\{\mathbf{s}\mathbf{s}^H\}
= P_s\mathbf{I}$ by unitarity of $\mathbf{A}^H$. Under the Gaussian approximation of Lemma~1, a memoryless
nonlinearity acting on pairwise-uncorrelated jointly-Gaussian samples
yields pairwise-uncorrelated Bussgang residuals, so
$\mathbf{C}_\eta \approx \sigma_\eta^2\mathbf{I}$ in the time domain and,
again by unitarity, $\mathbf{A}\mathbf{C}_\eta\mathbf{A}^H
\approx \sigma_\eta^2\mathbf{I}$ in the DAFT domain. The Bussgang gain and
distortion power are therefore approximately identical on every
chirp-subcarrier; the statement is an approximation, validated numerically
below, not an exact identity.

\begin{IEEEproof}
I.i.d.\ QAM symbols $\mathbf{x}$ satisfy
$\mathbb{E}\{\mathbf{x}\mathbf{x}^H\} = P_s\mathbf{I}$ by
definition. Since $\mathbf{A}^H$ is unitary, the time-domain frame
$\mathbf{s} = \mathbf{A}^H\mathbf{x}$ satisfies
$\mathbb{E}\{\mathbf{s}\mathbf{s}^H\} = \mathbf{A}^H
\mathbb{E}\{\mathbf{x}\mathbf{x}^H\}\mathbf{A}
= P_s\mathbf{A}^H\mathbf{A} = P_s\mathbf{I}$, i.e.\ the time-domain
samples are pairwise uncorrelated with equal power $P_s$, exactly as
for OFDM (this is also Lemma~1's Gaussian-approximation claim
applied to the second moment alone).

For pairwise-uncorrelated jointly-Gaussian inputs (the Gaussian approximation of Lemma~1; the AFDM samples are only approximately Gaussian, so the conclusions below hold approximately), a memoryless
nonlinearity $f$ acting sample-wise produces pairwise-uncorrelated
outputs: by the Bussgang theorem, $f(s_n) = B s_n + \eta_n$ with
$\mathbb{E}\{\eta_n s_m^*\} = 0$ for all $m$, and for $n\neq m$ the
cross term $\mathbb{E}\{\eta_n\eta_m^*\}$ vanishes because $s_n$ and
$s_m$ are independent (jointly Gaussian and uncorrelated implies
independent) and $\eta_n$ depends only on $s_n$
\cite{demir2021distortion}. Hence
$\mathbf{C}_\eta = \mathbb{E}\{\boldsymbol{\eta}\boldsymbol{\eta}^H\}
= \sigma_\eta^2\mathbf{I}$ in the time domain, where $\sigma_\eta^2$
is the (sample-independent) per-sample distortion power.

Applying the unitary DAFT matrix $\mathbf{A}$,
$\mathbb{E}\{(\mathbf{A}\boldsymbol{\eta})(\mathbf{A}\boldsymbol{\eta})^H\}
= \mathbf{A}\mathbf{C}_\eta\mathbf{A}^H
= \sigma_\eta^2\mathbf{A}\mathbf{A}^H = \sigma_\eta^2\mathbf{I}$,
so the distortion covariance remains white (and hence
subcarrier-uniform) after the forward DAFT. Since $B$ is a scalar
applied identically to every time-domain sample, the DAFT-domain
Bussgang gain is likewise the same scalar $B$ on every
chirp-subcarrier.
\end{IEEEproof}

\textbf{Empirical validation.} Measured distortion power per
chirp-subcarrier is flat within the finite-sample bound (coefficient
of variation $4.7\%$ over 400 frames), and the constant-modulus
chirp pilot introduces no deterministic per-subcarrier Bussgang bias
(AM/AM compression $\approx0.99998$ for all tested $N$, IBO). The
per-subcarrier vs.\ shared-Bussgang-weight gap is $0.1$~pp
($p=0.715$), and a TOST test confirms equivalence within $1$~pp
($p_{\mathrm{TOST}}=0.0032$)---both sources of OFDM-style
per-subcarrier variability are absent here.

%% ==================================================================
\FloatBarrier
\section{Derivation of the Linewidth Threshold $\Delta f^*$}
\label{sec:supp_dfstar}
%% ==================================================================

The proposed receiver removes two structured effects: the PA's complex gain, which is a bias on the symbols (via the Bussgang scalar, Stage~1), and the CPE (via the decision-directed tracker, Stage~2). The PA's Bussgang distortion $\eta$ is uncorrelated with the signal, and no scalar can remove it.  The residual
after Stage~2 is dominated by the ICI power
$\sigma_{\mathrm{ICI}}^2(\Delta f)$. This ICI is data-dependent and
structured, and dedicated schemes can
suppress the remaining ICI further by estimating more spectral
components of $e^{j\varphi(n)}$~\cite{petrovic2007}; the scalar and
local stages of our receivers do not, so for this receiver class the
residual ICI acts as additional noise. At large $\Delta f$ the
decision-directed CPE estimate also degrades, because ICI makes the
hard decisions it relies on unreliable.

We define the heuristic threshold at which the residual ICI equals the PA distortion that the scalar stage cannot remove, $\sigma_{\mathrm{ICI}}^2 = \sigma_\eta^2$; below it the PA distortion dominates the residual error, above it the ICI term does; $\Delta f^*$
is an operating rule for this receiver class, not a fundamental limit.  Setting
$\sigma_{\mathrm{ICI}}^2 = \sigma_\eta^2$ and using the
small-$\sigma_w^2$ approximation of the ICI power in Section~\ref{sec:pn_kernel},
\begin{equation*}
\bar{P}\,\sigma_w^2\,\frac{N^2-1}{6N} = \sigma_\eta^2(\mathrm{IBO}),
\end{equation*}
where $\sigma_w^2 = 2\pi\Delta f\,T_s$ is the Wiener phase variance
per sample and $\bar{P}(\mathrm{IBO})=\mathbb{E}[g^2(\kappa|s|)\cdot|s|^2]$
is the mean PA output power.  Solving for $\Delta f$ gives the
closed-form boundary
\begin{equation}
\Delta f^*
  = \frac{6N\,\sigma_\eta^2(\mathrm{IBO})}
         {2\pi T_s\,(N^2-1)\,\bar{P}(\mathrm{IBO})}.
\label{eq:dfstar_supp}
\end{equation}

\textbf{Reproducibility: PA parameters, drive convention, and the
defining integrals.}
The Rapp AM/AM and AM/PM characteristics (\eqref{eq:rapp}) are
evaluated at $p=2$, $\alpha=0.1$~rad, $P_{\mathrm{sat}}=1$
(Section~\ref{sec:pa_model}). Write
$G(r) = g_{\mathrm{PA}}(r)\,e^{j\phi_{\mathrm{PA}}(r)}$ for the
complex Rapp response, so the PA output for drive amplitude $r$ is
$G(r)\,e^{j\angle s}$. The input back-off sets the drive scale
$\kappa = 10^{-\mathrm{IBO}/20}$: for a unit-power input symbol $s$
($\mathbb{E}|s|^2=1$), the PA is driven at $\kappa s$ and its output
is referred back to unit scale by dividing by $\kappa$,
\begin{equation*}
f_{\mathrm{PA}}(s) = \frac{1}{\kappa}\,G(\kappa\,s)
  = \frac{1}{\kappa}\,g_{\mathrm{PA}}(\kappa|s|)\,
    e^{j\phi_{\mathrm{PA}}(\kappa|s|)}\,e^{j\angle s},
\end{equation*}
which is the convention used throughout the paper and in our simulation code.
With $u \triangleq |s|^2 \sim \mathrm{Exp}(1)$ for unit-power
circular Gaussian $s$, the Bussgang gain, distortion variance, and
mean PA output power used in~\eqref{eq:dfstar_supp} are
\begin{align}
B &= \mathbb{E}_u\!\left[\tfrac{1}{\kappa}\,
      g_{\mathrm{PA}}(\kappa\sqrt{u})\,
      e^{j\phi_{\mathrm{PA}}(\kappa\sqrt{u})}\,\sqrt{u}\right],
\label{eq:B_supp}\\
\sigma_\eta^2 &= \mathbb{E}_u\!\left[\tfrac{1}{\kappa^2}\,
      g_{\mathrm{PA}}^2(\kappa\sqrt{u})\right] - |B|^2,
\label{eq:sigmaeta_supp}\\
\bar{P}(\mathrm{IBO}) &= \mathbb{E}_u\!\left[\tfrac{1}{\kappa^2}\,
      g_{\mathrm{PA}}^2(\kappa\sqrt{u})\right],
\label{eq:pbar_supp}
\end{align}
i.e.\ $\bar{P} = \sigma_\eta^2 + |B|^2$ by construction (the total
rescaled-output power splits exactly into the Bussgang-linear part
$|B|^2$ and the distortion part $\sigma_\eta^2$, since $u$ has unit
mean and $\mathbb{E}\{s^*\eta\}=0$). All three integrals can be
evaluated by direct numerical quadrature over $u\in[0,\infty)$
against the density $e^{-u}$, or by Monte Carlo over
$s\sim\mathcal{CN}(0,1)$ as in the released code; both agree to
better than $0.1\%$ at $N_{\mathrm{MC}}=4\times10^6$ samples.

\textbf{Numerical evaluation.}
With $N=64$, $T_s=0.52\,\mu\mathrm{s}$, and the Rapp PA at
four IBO values, evaluating~\eqref{eq:B_supp}--\eqref{eq:pbar_supp}
gives:

\begin{center}
\setlength{\tabcolsep}{4pt}
\begin{tabular}{cccccc}
\hline
IBO (dB) & $\kappa$ & $|B|$ & $\sigma_\eta^2$ & $\bar{P}$ & $\Delta f^*$ (Hz) \\
\hline
1 & 0.8913 & 0.7643 & 0.0225 & 0.606 & 1063 \\
3 & 0.7079 & 0.8494 & 0.0141 & 0.735 & 550  \\
5 & 0.5623 & 0.9139 & 0.0069 & 0.841 & 234  \\
8 & 0.3981 & 0.9697 & 0.0015 & 0.941 & 44   \\
\hline
\end{tabular}
\end{center}

(At $\mathrm{IBO}=3$~dB this reproduces the Bussgang constants of
Section~\ref{sec:bussgang_const},
$B = 0.8487+0.0360j$, $\arg B = 2.43^\circ$, $\sigma_\eta^2=0.0141$,
exactly, confirming the convention above is the one used throughout.)

Because Theorem~\ref{thm:kernel}(b) makes the ICI power basis-independent within the equimodular class, $\Delta f^*$ depends only on
$(N, T_s)$ and the RF chain parameters $(\sigma_\eta^2, \bar{P})$;
it is the same for AFDM, OFDM, and OCDM at equal $N$ and $T_s$,
which an earlier toy-setup experiment confirmed to within $2\%$.

\textbf{Scaling laws.}
From~\eqref{eq:dfstar_supp}: $\Delta f^* \propto 1/T_s$ (faster
symbol rate relaxes the linewidth requirement), and for fixed
sampling rate $\Delta f^* \propto 6/(N\,\bar{P}) \sim 6/N$ for
large $N$ (larger frames are more sensitive to phase noise, as
expected).  The experiments of the main text keep the subcarrier spacing fixed, where $\Delta f^*$ is nearly independent of $N$.

%% ==================================================================
\FloatBarrier
\section{Experimental Protocol and Sounding Controls}
\label{sec:supp_realistic}
%% ==================================================================

\subsection{Protocol}
\label{sec:supp_real_protocol}

\emph{Statistics.} Uncoded effects are means of per-channel relative reductions; coded effects are one minus the ratio of mean BLERs. Both use 95\% bootstrap intervals over the 30 channels. For BLER, wins are counted over \emph{informative} channels, those where the two compared receivers do not tie (for example both error-free). The $p$-values of the main text use two-sided tests from a statistics library's default (exact for $n\le50$ without ties), and the diagnostics of Appendix~\ref{sec:supp_diag} report intervals only. No multiplicity correction is applied, so $p$-values are descriptive.

Table~\ref{tab:supp_realparams} lists the configuration of the main text's experiments.

\begin{table}[!htbp]
\renewcommand{\arraystretch}{1.1}
\caption{Configuration of the Realistic-Regime and Coded Experiments}
\label{tab:supp_realparams}
\centering
\small
\begin{tabular}{@{}l>{\raggedright\arraybackslash}p{0.62\textwidth}@{}}
\hline
\textbf{Parameter} & \textbf{Value} \\
\hline
Subcarriers $N$ & 256 (also 64, 512); spacing 30~kHz \\
Waveforms & AFDM ($c_1=(2(\lceil\nu_{\max}\rceil+2)+1)/2N$, $c_2=1/2N$); OFDM ($c_1=c_2=0$) \\
CPP / CP length & 16 samples \\
Channel & 6 paths, delays $\{0,1,2,4,6,9\}$, exp.\ PDP (decay 3 samples), \\
 & Rayleigh, $\nu_i=\nu_{\max}\cos\theta_i$, $\nu_{\max}\in\{0.1,0.5,1,2,4\}$ \\
Rapp PA & $p=2$, $\alpha=0.1$~rad, IBO $\in\{3,5,8\}$~dB \\
Phase noise & Wiener, 100~Hz, $\sigma_w^2=2\pi\Delta f/f_s$ \\
Sounding & $N$ single-chirp frames, data-RMS pilots, shared PN, thermal noise \\
Modulation & Gray 16-QAM \\
Gain calibration & $4096$ known symbols ($4096/N$ frames); M2M4 on 4096 unlabeled symbols \\
Uncoded SNR & $E_b/N_0=24$~dB ($4$ bits/symbol) \\
Coding & $(133,171)_8$, $K=7$, rates $1/2$, $3/4$; 1 codeword/symbol ($4N$ bits) \\
Decoder & max-log LLRs, soft Viterbi, terminated trellis, fixed interleaver \\
Channels / seeds & 30; $13i+7$; identical noise across waveforms \\
Codewords & 64 per channel per SNR point \\
\hline
\end{tabular}
\end{table}

\subsection{Sounding Models, Gain Estimators and the HWI Baseline (Toy Setup)}
\label{sec:supp_nb50}

The order effect was first re-tested in the toy setup ($N=64$, two paths, Doppler bins $[0,2]$, IBO $=3$~dB, $24$~dB, 30 channels $\times$ 640 frames) against three objections: that independent phase noise per sounding column or low-drive pilots create the effect, that a blind gain could replace the closed form, and that the HWI baseline was mis-specified. The sounding models are \emph{indep} (independent phase noise per column, low-drive pilots, no thermal noise), \emph{clean} (no phase noise), \emph{shared} (one phase-noise realization for all columns), \emph{fd\_nn} (shared phase noise, no thermal noise, constant-modulus pilots at data RMS) and \emph{fulldrive} (fd\_nn plus thermal noise; the model used in the main text), plus a linear-PA control (\emph{PA-off}, indep sounding). Table~\ref{tab:supp_nb50ber} gives mean BERs and Table~\ref{tab:supp_nb50paired} the paired effects.

\begin{table}[!htbp]
\renewcommand{\arraystretch}{1.05}
\caption{Toy Setup, Sounding Models: Mean BER ($\times10^{-2}$), 30 Paired Channels, $24$~dB, IBO $=3$~dB}
\label{tab:supp_nb50ber}
\centering
\small
\begin{tabular}{lccccccccc}
\hline
\textbf{Sounding} & LMMSE & Tracker & \makecell{Gain\\after} & \makecell{Gain\\before} & Calibrated & \makecell{Blind\\power} & \makecell{Blind\\M2M4} & \makecell{HWI\\+ trk.} & \makecell{HWI$^{\dagger}$\\+ trk.} \\
\hline
indep & 4.20 & 3.49 & 2.08 & 1.71 & 1.66 & 1.78 & 1.69 & 2.36 & 1.68 \\
clean & 2.70 & 1.68 & 0.72 & 0.47 & 0.45 & 0.50 & 0.47 & 1.08 & 0.46 \\
shared & 4.49 & 2.93 & 1.83 & 1.17 & 0.67 & 1.18 & 1.09 & 1.81 & 1.08 \\
fd\_nn & 3.62 & 1.86 & 1.51 & 1.19 & 0.67 & 1.16 & 1.07 & 1.64 & 1.05 \\
fulldrive & 3.86 & 2.12 & 1.75 & 1.43 & 0.89 & 1.41 & 1.30 & 1.77 & 1.28 \\
PA-off (control) & 1.28 & 0.45 & 0.45 & 0.45 & 0.44 & 0.49 & 0.44 & 0.45 & 0.44 \\
\hline
\end{tabular}

\vspace{3pt}
\parbox{0.94\textwidth}{\footnotesize ``Gain before/after'': closed-form scalar before/after the tracker. Blind power: per-frame real AGC $\sqrt{\mathrm{mean}|\hat s|^2}$ before the tracker. HWI + trk.: the HWI-LMMSE of~\cite{sui2026hwi} with regularizer $\sigma_d^2\mathbf{I}$, followed by the tracker. HWI$^{\dagger}$: covariance $(\alpha^2+\sigma_d^2)\hat{\mathbf{H}}\hat{\mathbf{H}}^H+\sigma_n^2\mathbf{I}$ with per-symbol bias removal, then the tracker.}
\end{table}

\begin{table}[!htbp]
\renewcommand{\arraystretch}{1.05}
\caption{Toy Setup, Sounding Models: Paired Effects, Mean Relative BER Reduction (\%) [95\% CI] (Wins of 30)}
\label{tab:supp_nb50paired}
\centering
\small
\setlength{\tabcolsep}{3.5pt}
\begin{tabular}{lcccccc}
\hline
 & \makecell{\textbf{Gain-first vs.}\\\textbf{tracker-first}} & \makecell{\textbf{Blind AGC first}\\\textbf{vs.\ after}} & \makecell{\textbf{M2M4 vs.}\\\textbf{closed form}} & \makecell{\textbf{Calibrated vs.}\\\textbf{closed form}} & \makecell{\textbf{HWI vs.}\\\textbf{closed form}} & \makecell{\textbf{HWI$^{\dagger}$ vs.}\\\textbf{closed form}} \\
\hline
indep & \makecell{19.0 (30)\\{}[17.5, 20.5]} & \makecell{17.2 (30)\\{}[16.1, 18.2]} & \makecell{1.2 (22)\\{}[0.6, 1.8]} & \makecell{3.2 (30)\\{}[2.8, 3.5]} & \makecell{$-$27.3 (0)\\{}[$-$39.4, $-$16.2]} & \makecell{1.7 (24)\\{}[1.2, 2.3]} \\
clean & \makecell{39.2 (30)\\{}[35.8, 42.4]} & \makecell{34.0 (30)\\{}[30.8, 37.1]} & \makecell{0.3 (18)\\{}[$-$1.6, 2.1]} & \makecell{5.7 (30)\\{}[4.7, 6.9]} & \makecell{$-$78.2 (0)\\{}[$-$112.9, $-$46.6]} & \makecell{0.7 (18)\\{}[$-$1.3, 2.6]} \\
shared & \makecell{37.6 (30)\\{}[33.2, 42.4]} & \makecell{34.5 (30)\\{}[30.6, 38.6]} & \makecell{1.4 (19)\\{}[$-$9.1, 11.6]} & \makecell{30.6 (29)\\{}[21.3, 40.6]} & \makecell{$-$50.7 (5)\\{}[$-$75.9, $-$28.2]} & \makecell{1.9 (19)\\{}[$-$8.8, 12.1]} \\
fd\_nn & \makecell{23.7 (30)\\{}[19.5, 28.2]} & \makecell{18.7 (30)\\{}[16.0, 21.3]} & \makecell{2.1 (19)\\{}[$-$8.1, 11.6]} & \makecell{29.0 (29)\\{}[19.5, 39.1]} & \makecell{$-$36.5 (11)\\{}[$-$61.9, $-$14.5]} & \makecell{$-$1.7 (16)\\{}[$-$11.2, 7.9]} \\
fulldrive & \makecell{21.1 (30)\\{}[17.1, 25.5]} & \makecell{15.7 (30)\\{}[13.3, 18.0]} & \makecell{1.8 (20)\\{}[$-$7.9, 10.8]} & \makecell{26.8 (30)\\{}[17.6, 36.5]} & \makecell{$-$21.1 (11)\\{}[$-$36.6, $-$6.5]} & \makecell{$-$1.0 (16)\\{}[$-$10.1, 8.4]} \\
PA-off (control) & \makecell{0.0 (identical;\\gain $\approx1$)} & \makecell{$-$4.2 (0)\\{}[$-$5.1, $-$3.5]} & \makecell{0.4 (16)\\{}[$-$0.5, 1.4]} & \makecell{1.0 (22)\\{}[$-$2.1, 3.1]} & 0.0 (identical) & \makecell{0.8 (24)\\{}[0.5, 1.1]} \\
\hline
\end{tabular}
\end{table}

Three conclusions follow. (i)~The order effect is not a sounding artifact: gain-first beats tracker-first on 30 of 30 channels, by $19$--$39\%$ ($p\le1.7\times10^{-6}$), under every sounding model with the PA on. (ii)~It is not an LMMSE-shrinkage artifact: with a linear PA the closed-form gain is $\approx1$ so the two orders coincide, and a data-derived blind AGC applied before the tracker is $4.2\%$ \emph{worse} than after it, whereas with the PA on it is $16$--$35\%$ better. (iii)~The closed form is not unique: the blind M2M4 gain is within $0.3$--$2.1\%$ of it (CIs contain zero for the realistic soundings), a blind power AGC is $4$--$11\%$ worse, and under shared-phase-noise soundings a calibrated complex gain is $27$--$31\%$ better, consistent with it also absorbing a static phase offset of about $6^{\circ}$ left in $\hat{\mathbf{H}}_{\mathrm{eff}}$ (in a further check, the M2M4 magnitude combined with the calibrated phase reproduced the calibrated gain's BER, whereas the calibrated phase alone did not). Mean gain magnitudes under the sounding models were $|B|=0.849$ analytic, $|\hat g_{\mathrm{cal}}|=0.838$--$0.842$ (indep, clean, shared) and $0.888$ (fd\_nn, fulldrive), and M2M4 $0.845$--$0.847$ and $0.893$. As implemented with the regularizer $\sigma_d^2\mathbf{I}$, the HWI baseline loses by $21$--$78\%$ to the closed form; with the corrected covariance and bias removal it is within about $2\%$ (HWI$^{\dagger}$ column), because that receiver is itself gain-first.

\section{Additional Diagnostics}
\label{sec:supp_diag}
%% ==================================================================

These runs use the realistic regime of the main text (Table~\ref{tab:supp_realparams}): $N=256$, six paths, $\mathrm{IBO}=5$~dB, $100$~Hz phase noise, fulldrive sounding, $24$~dB, 30 channels (seeds $13i+7$), $\nu_{\max}\in\{0.1,1\}$, and identical bits, noise and phase noise across every receiver and waveform. The closed-form scalar is $|W|=1.0857$, $\arg W=-1.92^\circ$, and the scalar corrected for the pilot's own compression is $|W_p|=1.0601$, $\arg W_p=-0.55^\circ$.

\subsection{Gain Sweep}
\label{sec:supp_diag_gain}

Table~\ref{tab:supp_gainsweep} applies $mW$ to the equalizer output before the tracker. The optimum is at $m=1.00$ in all four rows. The genie scalar (the complex gain that best maps the equalizer output to the true symbols, over test frames) has $|g|/|W|=0.962$ (s.d.\ over channels $0.003$--$0.004$) and $|g|/|W_p|=0.985$, with a phase that has mean $-0.1^\circ$ and s.d.\ $5.6^\circ$ over channels. The $3.8\%$ gap to the closed form is therefore systematic, and we attribute it to the pilots' own compression; the corrected form is $1.5\%$ from the genie value, and the BER basin is flat enough that the corrected form never helped. Ideal-hardware rows in this section use $\mathrm{IBO}=60$~dB rather than a linear PA, so they can differ in the last digit from Table~\ref{tab:ber}. The phase spread is consistent with a random static offset from the shared phase-noise sounding.

\begin{table}[!htbp]
\renewcommand{\arraystretch}{1.05}
\setlength{\tabcolsep}{2pt}
\caption{Uncoded BER ($\times10^{-2}$) Versus the Gain Multiplier $m$ (Gain $=mW$, Then Tracker)}
\label{tab:supp_gainsweep}
\centering
\small
\begin{tabular}{llccccccccc}
\hline
 & $\nu_{\max}$ & 0.80 & 0.85 & 0.90 & 0.95 & 1.00 & 1.05 & 1.10 & 1.15 & 1.20 \\
\hline
AFDM & 0.1 & 3.05 & 1.61 & 0.97 & 0.69 & \textbf{0.61} & 0.64 & 0.77 & 1.00 & 1.32 \\
OFDM & 0.1 & 2.60 & 1.50 & 1.07 & 0.89 & \textbf{0.83} & 0.84 & 0.91 & 1.04 & 1.26 \\
AFDM & 1 & 4.15 & 2.42 & 1.58 & 1.18 & \textbf{1.04} & 1.08 & 1.24 & 1.52 & 1.91 \\
OFDM & 1 & 3.65 & 2.32 & 1.73 & 1.46 & \textbf{1.36} & 1.36 & 1.45 & 1.63 & 1.89 \\
\hline
\end{tabular}
\end{table}

\subsection{Mechanism}
\label{sec:supp_diag_mech}

The first-decision SER is the SER of the hard decisions before any tracking. The values in Table~\ref{tab:mech} come from this run. Win counts (gain-first lower than tracker-first): first-decision SER $19$--$20$ of 30 channels, phase error $19$--$21$ of 30, final SER $28$--$29$ of 30 (final-SER order effect: AFDM $13.1\%$ [9.5, 16.8] and $10.1\%$ [8.1, 12.2], OFDM $6.8\%$ [3.7, 9.8] and $4.8\%$ [3.4, 6.7] at $\nu_{\max}=0.1$ and $1$).

\subsection{PT-RS-Style Baseline}
\label{sec:supp_diag_ptrs}

Sixteen known unit-power QPSK symbols replace data symbols at every 16th position (positions $8,24,\dots$), a $6.25\%$ overhead; all receivers are scored on the 240 data symbols. \emph{Pilot phase}: rotate by the phase of the pilot correlation, no gain. \emph{Pilot gain}: complex least-squares scalar from the pilots. \emph{Pilot gain+trk}: pilot gain followed by the tracker. \emph{Closed gain+ph}: closed-form gain, then pilot phase. \emph{Genie gain}: per-frame complex scalar from the true data symbols (a ceiling for scalar methods, not realizable). Table~\ref{tab:supp_ptrs} gives gain-first's relative BER reduction against each alternative; positive means gain-first is better.

\begin{table}[!htbp]
\renewcommand{\arraystretch}{1.05}
\setlength{\tabcolsep}{4pt}
\caption{Gain-First (No Pilots) Versus PT-RS-Style Alternatives: Mean Relative BER Reduction (\%) [95\% CI] (Channels Won of 30; Positive = Gain-First Better)}
\label{tab:supp_ptrs}
\centering
\small
\begin{tabular}{llcccccc}
\hline
 & $\nu_{\max}$ & \makecell{vs.\ pilot\\phase} & \makecell{vs.\ closed\\gain+ph} & \makecell{vs.\ pilot\\gain} & \makecell{vs.\ pilot\\gain+trk} & \makecell{vs.\ genie\\gain} & \makecell{vs.\\tracker} \\
\hline
AFDM & 0.1 & \makecell{16.6 (27)\\{}[$-$1.9, 32.0]} & \makecell{$-$22.0 (19)\\{}[$-$55.7, 3.8]} & \makecell{$-$9.2 (23)\\{}[$-$38.2, 13.0]} & \makecell{$-$22.1 (21)\\{}[$-$56.8, 3.7]} & \makecell{$-$31.4 (14)\\{}[$-$68.4, $-$4.0]} & \makecell{29.8 (28)\\{}[24.7, 34.8]} \\
OFDM & 0.1 & \makecell{3.6 (26)\\{}[$-$14.4, 17.6]} & \makecell{$-$21.5 (21)\\{}[$-$52.6, 0.4]} & \makecell{$-$13.9 (22)\\{}[$-$42.4, 6.3]} & \makecell{$-$21.0 (20)\\{}[$-$53.3, 1.6]} & \makecell{$-$26.7 (17)\\{}[$-$61.8, $-$2.6]} & \makecell{16.1 (28)\\{}[11.1, 21.1]} \\
AFDM & 1 & \makecell{16.9 (27)\\{}[5.6, 25.8]} & \makecell{$-$12.0 (21)\\{}[$-$30.7, 2.3]} & \makecell{$-$1.3 (24)\\{}[$-$17.4, 11.1]} & \makecell{$-$9.8 (21)\\{}[$-$27.6, 3.9]} & \makecell{$-$17.2 (15)\\{}[$-$36.5, $-$2.5]} & \makecell{24.8 (29)\\{}[21.4, 28.2]} \\
OFDM & 1 & \makecell{9.1 (27)\\{}[1.8, 15.0]} & \makecell{$-$6.1 (22)\\{}[$-$15.6, 1.2]} & \makecell{$-$0.6 (23)\\{}[$-$9.4, 6.1]} & \makecell{$-$4.5 (22)\\{}[$-$13.5, 2.6]} & \makecell{$-$7.4 (19)\\{}[$-$17.0, 0.0]} & \makecell{14.5 (30)\\{}[12.0, 17.4]} \\
\hline
\end{tabular}
\end{table}

The negative entries mean gain-first has \emph{higher} BER than that alternative. The ``wins'' in this table count channels on which gain-first has lower BER (ties were not excluded here, so wins and losses need not sum to 30).

\subsection{Spreading Controls}
\label{sec:supp_diag_spread}

OCDM is the DAFT with $c_1=c_2=1/(2N)$. DFT-s-OFDM with full-band spreading is a single carrier with an ordinary cyclic prefix; its channel is sounded with constant-modulus DFT-basis pilots, because time-domain impulses would drive the PA at amplitude $\sqrt{N}$. Its lower PAPR means it is compressed less at equal IBO, and the closed-form scalar (derived for a Gaussian input) is the wrong constant for it, so its \emph{gain-first} column is penalized and the preamble column is the fair one. We computed no paired intervals for AFDM versus DFT-s-OFDM.

\begin{table}[!htbp]
\renewcommand{\arraystretch}{1.05}
\setlength{\tabcolsep}{4pt}
\caption{Spreading Controls: Mean Uncoded BER ($\times10^{-2}$) Under Real Hardware; Ideal Hardware in $\times10^{-4}$ (All Receivers Equal)}
\label{tab:supp_spread}
\centering
\footnotesize
\begin{tabular}{llcccccc}
\hline
$\nu_{\max}$ & Waveform & LMMSE & Tracker & Closed gain $\to$ trk & Trk $\to$ closed gain & Preamble gain $\to$ trk & Ideal hw. (LMMSE) \\
\hline
0.1 & AFDM & 2.067 & 0.825 & 0.611 & 0.700 & 0.499 & 0.2 \\
 & OCDM & 2.088 & 0.870 & 0.656 & 0.745 & 0.547 & 2.6 \\
 & DFT-s-OFDM & 1.701 & 0.593 & 0.567 & 0.611 & 0.404 & 3.1 \\
 & OFDM & 2.080 & 0.991 & 0.844 & 0.901 & 0.744 & 26.1 \\
\hline
1 & AFDM & 2.682 & 1.363 & 1.029 & 1.154 & 0.911 & 4.7 \\
 & OCDM & 2.759 & 1.531 & 1.229 & 1.332 & 1.118 & 28.4 \\
 & DFT-s-OFDM & 2.250 & 1.102 & 1.051 & 1.097 & 0.888 & 27.2 \\
 & OFDM & 2.759 & 1.619 & 1.379 & 1.458 & 1.269 & 60.1 \\
\hline
\end{tabular}
\end{table}

Edges over OFDM with tracker only (real hardware): $\nu_{\max}=0.1$: AFDM $21.3\%$ [13.5, 29.0], OCDM $17.9\%$ [10.2, 25.2], DFT-s-OFDM $45.2\%$ [34.7, 54.6]; $\nu_{\max}=1$: AFDM $22.0\%$ [16.9, 27.1], OCDM $7.5\%$ [3.1, 11.6], DFT-s-OFDM $36.7\%$ [30.6, 42.5]. Ideal-hardware edges over OFDM (all receivers equal): $\nu_{\max}=0.1$: AFDM $99.6\%$ [99.3, 99.9], OCDM $94.6\%$ [90.9, 97.8], DFT-s-OFDM $93.3\%$ [89.0, 96.9]; $\nu_{\max}=1$: $95.3\%$ [93.2, 97.1], $59.9\%$ [50.2, 69.2], $58.0\%$ [42.0, 69.9]. Order effects with ideal hardware are computed on BERs near $10^{-5}$--$10^{-4}$ and are not meaningful, so we do not report them. Order effects with real hardware, gain-first versus tracker-first: DFT-s-OFDM $4.7\%$ [$-$2.0, 10.3] and $4.6\%$ [2.4, 6.7].

\FloatBarrier
\end{document}